\newcommand{\lipics}{}
\let\c@author\relax
\let\oldep\everypar  \newtoks\everypar  \oldep{\the\everypar\looseness=-1}
\definecolor{darkblue}{rgb}{0,0,0.45}
\definecolor{darkred}{rgb}{0.6,0,0}
\definecolor{darkgreen}{rgb}{0.13,0.5,0}
\crefname{algocf}{Algorithm}{Algorithms}
  \theoremstyle{plain}
\newaliascnt{lem}{thm}
\newaliascnt{crl}{thm}
\newaliascnt{dfn}{thm}
\newaliascnt{clm}{thm}
\newaliascnt{prop}{thm}
\newaliascnt{hyp}{thm}
\newaliascnt{obs}{thm}
\newaliascnt{fact}{thm}
\newtheorem{fact}[theorem]{Fact}
\newtheorem{reduction}{Reduction rule}
  \theoremstyle{definition}
\crefname{thm}{Theorem}{Theorems}
\crefname{lem}{Lemma}{Lemmas}
\crefname{obs}{Observation}{Observations}
\crefname{fact}{Fact}{Facts}
\newcommand{\eps}{{\varepsilon}}
\newif\iflabel
\newif\ifdbs
\newif\ifamp
\IfSubStr\expandafter{\BODY}{\label}{\labeltrue}{\labelfalse}%
\IfSubStr\expandafter{\BODY}{\\}{\dbstrue}{\dbsfalse}%
\IfSubStr\expandafter{\BODY}{&}{\amptrue}{\ampfalse}%
\def\doitallstar{}\else\def\doitallstar{*}\fi
      \def\doitallname{align}%
      \def\doitallname{multline}%
    \def\doitallname{equation}
\edef\x{\endgroup
    \noexpand\begin{\doitallname\doitallstar}%
    \noexpand\BODY
    \noexpand\end{\doitallname\doitallstar}%
  }\x
\def\[#1\]{\begin{doitall}#1\end{doitall}}
\newcommand{\executeiffilenewer}[3]{%
\ifnum\pdfstrcmp{\pdffilemoddate{#1}}%
{\pdffilemoddate{#2}}>0%
{\immediate\write18{#3}}\fi%
}
\newcommand{%
\executeiffilenewer{.svg}{.pdf}%
{inkscape -z -D --file=.svg %
--export-pdf=.pdf --export-latex}%
\everymath{\color{black}}%
\textcolor{black}{\input{.pdf_tex}}\everymath{\color{darkred}}%
}[1]{%
\executeiffilenewer{#1.svg}{#1.pdf}%
{inkscape -z -D --file=#1.svg %
--export-pdf=#1.pdf --export-latex}%
\everymath{\color{black}}%
\textcolor{black}{\input{#1.pdf_tex}}\everymath{\color{darkred}}%
}
\newcommand{%
\executeiffilenewer{.svg}{_.pdf}%
{inkscape -z -D --file=.svg --export-id= -j %
--export-pdf=_.pdf --export-latex}%
\input{_.pdf_tex}%
}[2]{%
\executeiffilenewer{#1.svg}{#1_#2.pdf}%
{inkscape -z -D --file=#1.svg --export-id=#2 -j %
--export-pdf=#1_#2.pdf --export-latex}%
\input{#1_#2.pdf_tex}%
}
\newcommand{\defproblem}[3]{
  \vspace{1mm}
\noindent\fbox{
  \begin{minipage}{0.96\textwidth}
  \begin{tabular*}{\textwidth}{@{\extracolsep{\fill}}lr} #1 \\ \end{tabular*}
  {\bf{Input:}} #2  \\
  {\bf{Output:}} #3
  \end{minipage}
  }
  \vspace{1mm}
}
\newcommand{\calS}{\mathcal{S}}
\newcommand{\calC}{\mathcal{C}}
\newcommand{\wild}{\star}
\newcommand{\zs}{(\mathbb{Z}_{\ge 0}\cup\{\wild\})}
\newcommand{\zplus}{\mathbb{Z}^{+}}
\newcommand{\bin}{\{0,1\}}
\newcommand{\binkk}{\{0,1\}^{k\times k}}
\newcommand{\bigO}{\mathcal{O}}
\newcommand{\es}{\triangleq}
\newcommand{\basis}{\tilde{B}}
\newcommand{\RR}{\mathbb{R}}
\newcommand{\B}{\mathbb{B}}
\newcommand{\Bnk}{\mathbb{B}^{n\times k}}
\newcommand{\nnr}{R}
\newcommand{\zerov}{\mathbf{0}}
\newcommand{\ebas}{\mathtt{ExtendBasis}}
\newcommand{\ecp}{\textsc{ECP}}
\title{Fixed-Parameter Tractability of the \texorpdfstring{\newline}{} Weighted Edge Clique 
Partition Problem}
\titlerunning{Fixed-Parameter Tractability of the Weighted Edge Clique Partition 
Problem}
\author{Andreas~Emil~Feldmann}{Department of Applied Mathematics, Charles 
University in Prague, Czechia}
{feldmann.a.e@gmail.com}
{https://orcid.org/0000-0001-6229-5332}
{Czech 
Science Foundation GA{\v C}R (grant \#19-27871X), and Center for Foundations of 
Modern Computer Science (Charles Univ.\ project UNCE/SCI/004).}
\author{Davis~Issac}{Hasso Plattner Institute, Potsdam, 
Germany.}{davis.issac@hpi.de}{https://orcid.org/0000-0001-5559-7471}
{the major part of the work 
was done when this author was a postdoctoral researcher at Charles University, 
Prague, Czechia. He was funded by Charles Univ.\ project UNCE/SCI/004.}
\author{Ashutosh~Rai}{Department of Applied Mathematics, Charles University in 
Prague, Czechia}{ashuthosh@kam.mff.cuni.cz}{https://orcid.org/0000-0003-2429-750X}
{supported by Center for Foundations of Modern Computer Science (Charles Univ. project UNCE/SCI/004).}
\authorrunning{A.~E.~Feldmann, D.~Issac, A.~Rai}
\keywords{%
Edge Clique Partition,
fixed-parameter tractability,
kernelization
}%
\begin{document}

\maketitle

\begin{abstract}
We develop an FPT algorithm and a bi-kernel for the Weighted Edge Clique 
Partition (WECP) problem, where a graph with $n$ vertices and integer edge 
weights is given together with an integer $k$, and the aim is to find $k$ 
cliques, such that every edge appears in exactly as many cliques as its weight. 
The problem has been previously only studied in the unweighted version called 
Edge Clique Partition (ECP), where the edges need to be partitioned into $k$ 
cliques. It was shown that ECP admits a kernel with~$k^2$ vertices [Mujuni and 
Rosamond, 2008], but this kernel does not extend to WECP. The previously fastest 
algorithm known for ECP has a runtime of $2^{\bigO(k^2)}n^{O(1)}$ [Issac, 2019]. 
For WECP we develop a bi-kernel with $4^k$ vertices, and an algorithm with 
runtime $2^{\bigO(k^{3/2}w^{1/2}\log(k/w))}n^{O(1)}$, where $w$ is the maximum 
edge weight. The latter in particular improves the runtime for ECP 
to~$2^{\bigO(k^{3/2}\log k)}n^{O(1)}$. 
\end{abstract}

\section{Introduction}

Problems that aim to cover a graph by a small number of cliques have a long 
history and have been studied extensively in the 
past (see e.g.~\cite{chalermsook2014nearly, chandran2017parameterized, 
cygan2016known, gramm2006data, ma1988complexity, mujuni2008parameterized, 
fleischer2010edge, fleischner2009covering}). For these types 
of problems we are given a graph~$G$ and an integer $k$, and the tasks include 
to either cover or partition the edges or the vertices of $G$ using at most $k$ 
cliques or bicliques (i.e., complete bipartite graphs). 
Plenty of applications 
exist in both theory~\cite{roberts1985applications} and 
practice, e.g., in computational 
biology~\cite{blanchette2012clique,figueroa2004clustering}, compiler 
optimization~\cite{rajagopalan2000handling}, language 
theory~\cite{gruber2007inapproximability}, and database 
tiling~\cite{geerts2004tiling}. In this paper, we study the variant called the 
Edge Clique Partition (\ecp) problem,
defined as follows.

\defproblem{\large{ECP} (Edge Clique Partition)}{a graph $G$, a positive integer $k$}{a partition of the edges of $G$ into $k$ cliques (if it exists, otherwise output NO)}

{\ecp} has applications in 
computational biology~\cite{blanchette2012clique,figueroa2004clustering}.
{\ecp} is known to be NP-hard even in $K_4$-free graphs and chordal 
graphs~\cite{ma1988complexity}, and together with~\cite{kann1991maximum}, the 
reductions of~\cite{ma1988complexity} imply APX-hardness. To circumvent 
these hardness results, we focus on {\bf parameterized 
algorithms} (see \cite{cygan2015parameterized} for the basics). 
More specifically, we focus on FPT algorithms for the natural parameter~$k$, i.e., 
the number of cliques. \citet{fleischer2010edge} show that on planar graphs, ECP 
can be solved in~$\bigO^*(2^{96\sqrt{k}})$ time\footnote{the $\bigO^*$-notation 
hides polynomial factors in input size}. They also generalized the result to $d$-degenerate graphs, giving an algorithm with $\bigO^*(2^{dk})$ runtime, which is linear for bounded-degeneracy graphs. For $K_4$-free graphs, 
\citet{mujuni2008parameterized} gave an algorithm with a runtime\footnote{in 
\cite{mujuni2008parameterized} the runtime was mistakenly reported as 
$\bigO^*(k^{(k+3)/2)})$, cf.~\cite{fleischer2010edge}.} of 
$\bigO^*((\frac{k+3}{2})^k)=\bigO^*(2^{\bigO(k\log k)})$, which was improved by 
\citet{fleischer2010edge} to $\bigO^*((\sqrt{k}/3)^k)$ and even 
$\bigO^*((64c)^k)$ for some large (unspecified) constant $c$. Hence, also for 
these graphs an exponent linear in $k$ is possible, albeit with a very large 
base. On the other hand, the algorithm of \citet{mujuni2008parameterized} 
for $K_4$-free graphs has been empirically shown~\cite{wu2011research} to be 
rather efficient, even though it ``only'' comes with a near-linear exponent of 
$\bigO(k\log k)$. 

\citet{mujuni2008parameterized} showed  ECP 
is FPT in $k$ for general graphs, 
by giving a {\bf kernel} (see \cite{cygan2015parameterized} for definition) of size $k^2$.
However, no algorithms with (near-)linear dependence 
on $k$ in the exponent are known for ECP. The fastest algorithm so far is given by 
\citet[Theorem~3.10]{Issac_2019} and runs in~$\bigO^*(2^{2k^2+k\log_2 k+k})$ 
time, i.e., the exponent is quadratic in $k$. This algorithm is an adaptation of 
an algorithm by \citet{chandran2017parameterized} for the Biclique 
Partition problem (where we want to partition the edges 
into $k$ bicliques) in bipartite graphs. In contrast, the best runtime lower bound known for ECP 
only excludes a sub-linear dependence on $k$ in the exponent: there is no 
$2^{o(k)}n^{O(1)}$ time algorithm for ECP assuming the Exponential Time Hypothesis (ETH). 
This follows due to a $2^{o(n)}$ lower bound for 3-Dimensional Matching 
\cite{jansen2016bounding} under ETH, and a reduction from Exact 3-Cover (which 
is a generalization of 3-Dimensional Matching) to ECP by 
\citet{ma1988complexity}. An obvious open problem arising here is to close the 
gap between the upper and lower bounds on the runtime for ECP.
Our main contribution is to show that for general graphs the exponent of the 
runtime for ECP can be significantly lowered from $\bigO(k^2)$ 
to $(k^{3/2}+\bigO(k))\log k$. 
\begin{theorem}
\label{thm:ecp-algo}
ECP has an algorithm running in $(2e)^{(k^{3/2}+\bigO(k))\log_{2e}(k)}+\bigO(n^2\log n)$ time. 
\end{theorem}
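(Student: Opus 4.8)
The plan is to recast \ecp{} in the matrix language underlying the \rkbsd{} framework. A partition of $E(G)$ into $k$ cliques corresponds to a matrix $B\in\{0,1\}^{n\times k}$ whose $k$ columns are the indicator vectors of the cliques; then row $v$ of $B$ is the \emph{signature} $S_v\subseteq[k]$ listing the cliques containing $v$. Two vertices share a common clique exactly when they are adjacent, and a partition forbids an edge from lying in two cliques, so a valid solution is precisely a $B$ with $\langle S_u,S_v\rangle=A_{uv}$ for every $u\ne v$, where $A$ is the adjacency matrix; equivalently $BB^{\top}$ matches $A$ off the diagonal. As \ecp{} is exactly WECP with all weights equal to $1$ (so $w=1$), I would derive \cref{thm:ecp-algo} from the general \rkbsd{} solver specialised to $w=1$, so the task reduces to enumerating a compact description of a hypothetical $B$.

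That compact description is a \emph{basis} of the row space of $B$. Its rows span a space of dimension $r\le k$, so $r$ of the signatures form a basis $\basis$; once $\basis$ and the vertices realising its rows are fixed, every remaining signature $S_v$ is the unique vector of the span whose inner products with the basis rows equal the known adjacencies $A_{vv_j}$ to the basis vertices $v_j$, and hence (the basis Gram matrix being invertible) $S_v$ is forced and recoverable by linear algebra over $\RR$. The heart of the argument is a structural lemma asserting that $\basis$ can always be chosen so that every basis row has Hamming weight $\bigO(\sqrt{k})$ --- more generally $\bigO(\sqrt{kw})$, which is the source of the $w^{1/2}$ in the general bound. I expect this to be the main obstacle. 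The bound is tight: the point--line incidences of a projective plane of order $q$ give $k=q^2+q+1$ cliques in which every signature has weight $q+1=\Theta(\sqrt{k})$ while the rows have full rank $k$, so weight $\Theta(\sqrt{k})$ is genuinely unavoidable. To prove sufficiency I would process candidate signatures greedily in order of increasing weight and argue by exchange that a basis row of weight $p$ sits in $p$ cliques meeting pairwise in a single vertex; keeping these cliques linearly independent then forces $\Omega(p^{2})$ cliques in total, so $p=\bigO(\sqrt{k})$ as the rank does not exceed $k$. Establishing this quadratic trade-off cleanly is the crux.

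Granting the lemma, the algorithm runs as follows. First apply the $k^2$-vertex kernel of \citet{mujuni2008parameterized}, computable in $\bigO(n^{2}\log n)$ time, leaving an equivalent instance on $\bigO(k^{2})$ vertices; all later work then depends on $k$ alone. Next guess the $r\le k$ basis vertices among the $\bigO(k^{2})$ kernel vertices, in $\binom{\bigO(k^{2})}{\le k}=2^{\bigO(k\log k)}$ ways, and build their weight-$\bigO(\sqrt{k})$ signatures one low-weight row at a time through $\ebas$, in $\binom{k}{\bigO(\sqrt{k})}^{\,k}=2^{\bigO(k^{3/2}\log k)}$ ways. For each guess with invertible Gram matrix, reconstruct all remaining signatures as above, round each to $\{0,1\}$, and verify that the resulting $B$ reproduces $A$ in every off-diagonal entry and yields exactly $k$ nonempty cliques; this verification is polynomial in $k$, and by the lemma some guess succeeds precisely when a partition exists. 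The signature enumeration dominates: bounding $\binom{k}{\bigO(\sqrt{k})}^{k}$ with care contributes the factor $(2e)^{k^{3/2}\log_{2e}k}$, the basis-vertex enumeration contributes $(2e)^{\bigO(k)\log_{2e}k}$, and kernelisation contributes the additive $\bigO(n^{2}\log n)$, which together give the claimed running time.
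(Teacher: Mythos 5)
Your overall architecture---reduce ECP to the wildcard-diagonal symmetric decomposition problem, kernelize, guess a row basis of the solution, and reconstruct the remaining rows---is sound, and for ECP specifically your route of guessing the basis \emph{vertex indices} inside the $k^2$-vertex kernel of \citet{mujuni2008parameterized} (costing only $\binom{\bigO(k^2)}{\le k}=2^{\bigO(k\log k)}$, \`a la \citet{chandran2017parameterized}) is a legitimate simplification of the paper's algorithm, which deliberately avoids index-guessing because for the weighted problem only a $4^k$-vertex kernel is available. The Gram-matrix reconstruction of the non-basis rows, followed by verification, is also fine. However, there is a genuine gap exactly where you anticipated one: the structural lemma. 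You assert that the basis can be chosen so that \emph{every basis row} has Hamming weight $\bigO(\sqrt{k})$, and your proposed proof is that a row of weight $p$ lies in $p$ cliques meeting pairwise in one vertex, which ``forces $\Omega(p^2)$ cliques.'' That implication is false: in the star $K_{1,k}$, the center lies in $k$ cliques (the edges), these cliques pairwise intersect exactly in the center, and yet there are only $k$ cliques in total, not $\Omega(k^2)$. So signatures of weight $\Theta(k)$, far above $\sqrt{k}$, genuinely occur, and there are solutions $B$ whose row space forces such a heavy row into \emph{every} basis of $B$ (e.g.\ the star solution in which the center additionally gets a singleton-clique column). Your per-row bound would then have to be rescued by an argument that \emph{some other} solution admits an all-light basis---an argument neither you nor the paper provides.

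The statement that is actually true, and that the paper proves and uses (\cref{fact:sol-wlimited} and \cref{lem:basisenum}), is a bound on the \emph{total} number of ones in the basis matrix, not a per-row bound: distinct rows of any solution have pairwise dot products at most $w$ (at most $1$ for ECP), so any submatrix---in particular any basis---contains no $2\times(w+1)$ all-ones submatrix, and the K\H{o}v\'ari--S\'os--Tur\'an/Zarankiewicz bound gives at most $k^{3/2}w^{1/2}+k$ ones in total. Enumerating all binary $k\times k$ matrices with at most $k^{3/2}+k$ ones costs $2^{k^{3/2}+k}\binom{k^2}{k^{3/2}+k}=(2e\sqrt{k})^{k^{3/2}+k}$, which matches your claimed $2^{\bigO(k^{3/2}\log k)}$ enumeration cost and, unlike your per-row enumeration, is guaranteed to cover every possible basis. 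So the fix is concrete: replace your per-row weight lemma and its exchange argument by the total-ones Zarankiewicz bound, and enumerate candidate bases by total weight; the rest of your algorithm (index guessing in the $k^2$ kernel, Gram reconstruction, verification) then yields the claimed runtime for ECP. Note also that the matching lower bound (\cref{thm:ones}) is a lower bound of $\Theta(k^{3/2})$ on the \emph{total} number of ones in every basis, which is consistent with, but does not assert, your per-row picture.
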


In fact, our algorithm solves a more general 
problem that we call the {\bf Weighted Edge Clique Partition} (WECP) problem defined as follows:

\defproblem{\large{WECP} (Weighted Edge Clique Partition)}{a graph $G$, a weight function $w_e:E(G)\rightarrow \zplus$, and a positive integer $k$}{a set of at most $k$ cliques such that each edge appears in exactly as many 
cliques as its weight (if it exists, otherwise output NO)}

Note that WECP is equivalent to ECP on a multigraph, by 
taking the weights as the edge multiplicities. The WECP problem also has 
applications in computational biology, specifically in the inference of gene pathways  
from gene co-expression data~\cite{Blair}. Thus 
developing efficient algorithms for WECP is of practical relevance. It was not 
known till now whether WECP is even FPT; in particular, the known FPT algorithms for 
ECP do not extend to WECP. 
The reason is that the first step of these algorithms is to run the kernelization algorithm but for WECP, no $f(k)$-kernel for any computable function $f$ is known. 
This is in contrast with the $k^2$-kernel of ECP and also a $3^k$-kernel of the very similar Biclique Partition problem by \citet{fleischner2009covering}.
We first show a so-called \emph{bi-kernel} with $4^{k}$ vertices for WECP that can be computed in polynomial 
time. That is, the kernel is for an even more general problem that we call the {\bf 
Annotated Weighted Edge Clique Partition} (AWECP) problem, defined as follows.

\defproblem{\large{AWECP} (Annotated Weighted Edge Clique Partition)}{a graph $G$, edge-weights $w_e:E(G)\rightarrow \zplus$, a special set of vertices $W\subseteq V(G)$, vertex weights $w_v:W\rightarrow \zplus$, and a positive integer $k$}{a set of at most $k$ cliques such that each edge $e$ appears in exactly as many 
cliques as its edge-weight, and each vertex in $W$ appears in exactly as many 
cliques as its vertex-weight (if such $k$ cliques exist, otherwise output NO)}

Note that WECP is exactly the special case of AWECP when 
$W$ is empty. We give a kernel for AWECP as follows.

\begin{theorem}\label{thm:kernel}
	AWECP has a kernelizaiton algorithm that runs in $\bigO(n^2\log n)$ time and 
outputs a kernel having at most $4^{k}$ vertices and encoding length~$\bigO(16^k\log 
k)$ bits.
\end{theorem}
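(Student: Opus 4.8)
The plan is to first reformulate a solution in terms of \emph{signatures}. A set of at most $k$ cliques $C_1,\dots,C_k$ (some possibly empty) is the same thing as an assignment of a vector $s_v\in\{0,1\}^k$ to each vertex $v$, where $s_v[i]=1$ iff $v\in C_i$. Any two vertices that share a coordinate are automatically placed in a common clique and hence adjacent, so the induced sets $C_i=\{v:s_v[i]=1\}$ are genuinely cliques; consequently a valid solution is \emph{equivalent} to an assignment $s:V(G)\to\{0,1\}^k$ satisfying $\langle s_u,s_v\rangle=w_e(uv)$ for every pair $u,v$ (with $w_e(uv)=0$ for non-adjacent pairs) and $\langle s_v,s_v\rangle=w_v$ for every $v\in W$. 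This characterisation is the backbone of the whole argument.

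From it two easy reduction rules follow. Since each edge-weight equals an inner product of two $\{0,1\}^k$-vectors it is at most $k$, and likewise every vertex-weight is at most $k$; so if any weight exceeds $k$ we return a fixed trivial NO-instance. After this rule every weight is bounded by $k$ and thus storable in $\bigO(\log k)$ bits, which is the source of the $\log k$ factor in the claimed encoding length. Next I would define \emph{twins} as vertices with identical weighted open neighbourhoods, which can be detected by lexicographically sorting the $n$ neighbourhood-rows (each of length $n$); this sorting step is exactly what gives the $\bigO(n^2\log n)$ running time. The key observation is that two vertices with the \emph{same} signature are necessarily twins, because $s_u=s_v$ forces $\langle s_u,s_x\rangle=\langle s_v,s_x\rangle$ for every third vertex $x$. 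Hence in any YES-instance the number of twin classes is at most the number of distinct signatures, namely $2^k$; if the input has more than $2^k$ twin classes we return NO.

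It remains to bound the number of vertices \emph{inside} each twin class by $2^k$, after which the kernel has at most $2^k\cdot 2^k=4^k$ vertices, at most $\bigO(16^k)$ edges of weight at most $k$, and therefore encoding length $\bigO(16^k\log k)$. Within a single class all assigned signatures have a common inner product with every outside vertex, and they must additionally reproduce the internal weighted edges; the number of admissible distinct signatures is again at most $2^k$, and a counting/rank argument on the Gram matrix of these $\{0,1\}^k$-vectors (their pairwise inner products being pinned down by the internal edges) shows that once a class contains more than $2^k$ vertices the surplus ones are redundant and can be deleted. When deleting a vertex I would record its residual clique-count requirement by moving a representative into the special set $W$ and setting its vertex-weight accordingly; this is precisely the reason the kernel is phrased for the \emph{annotated} problem AWECP and thus constitutes a bi-kernel.

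The main obstacle I expect is the two-way safety of this within-class reduction: deletions must preserve NO-instances as well as YES-instances, which means one cannot simply keep a single representative per class. With too few representatives the reduced instance could admit a spurious assignment of pairwise-distinct signatures that fails to lift back to the full class, because copying a kept signature onto a deleted vertex changes the relevant internal inner product. The remedy is to retain \emph{enough} representatives (together with the right $W$-annotations) to force the internal inner-product constraints to be reproduced, and the rank/counting bound on the number of $\{0,1\}^k$-vectors with prescribed pairwise inner products is the technical heart that makes $2^k$ representatives per class both sufficient for lifting and necessary to rule out spurious solutions.
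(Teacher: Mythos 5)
Your outer scaffolding (the signature/BSD reformulation, the bound $w\le k$ giving the $\log k$ factor, twin classes, the $2^k$ bound on the number of classes via pigeonhole on signatures, and sorting rows to get $\bigO(n^2\log n)$) matches the paper. The genuine gap is in the within-class size reduction, which is the technical heart of the kernel, and your treatment of it is wrong on two counts. First, your premise that ``one cannot simply keep a single representative per class'' is false: the paper's rule collapses every class $D$ with $|D|>2^k$ to a \emph{single} vertex $v$, annotated with vertex weight $A'_{v,v}:=A_{u,v}$, the common internal edge weight of the class (well-defined because all internal weights of a twin class coincide). This annotation is precisely what kills the ``spurious solution'' you worry about in the lifting direction: any solution of the reduced instance must give $v$ a signature $s_v$ with $\langle s_v,s_v\rangle=A_{u,v}$, and copying $s_v$ onto every deleted vertex of $D$ then reproduces all internal inner products (each equals $\langle s_v,s_v\rangle=A_{u,v}$) as well as all external ones (by the twin property). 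So lifting with one annotated representative is unproblematic.

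Second---and this is the missing idea---the nontrivial direction is the \emph{forward} one: if the original instance is YES, why does the reduced instance, with its \emph{new} constraint $A'_{v,v}=A_{u,v}$, remain YES? An arbitrary solution of the original instance need not give $v$ a signature whose self-inner-product equals $A_{u,v}$, since $v$'s diagonal was a wildcard before the reduction. The paper resolves this with a lemma you never state or prove: in any block of twins, any solution can be modified so that the rows of the block are either all pairwise distinct or all identical (if two rows coincide, copy that common row onto the whole block; the twin property and the constancy of internal weights show all constraints survive). Since $\{0,1\}^k$ has only $2^k$ elements, a block with more than $2^k$ vertices cannot have all rows distinct, hence admits a solution in which all its rows equal some common $s$; that $s$ satisfies $\langle s,s\rangle=\langle s_u,s_v\rangle=A_{u,v}$, so the solution restricts to a valid solution of the annotated reduced instance. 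Your substitute for this---a ``rank/counting bound on the number of $\{0,1\}^k$-vectors with prescribed pairwise inner products''---is never formulated, let alone proved; and keeping exactly $2^k$ representatives per class does not circumvent the issue, since with $2^k$ kept vertices pigeonhole yields no repeated signature, so your own objection about non-liftable all-distinct assignments applies to your own rule. As written, the proof of the central reduction rule is incomplete.
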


Then we proceed to give the first FPT algorithm for WECP, which also implies the improved algorithm for ECP.

\begin{theorem}
\label{thm:algo}
WECP with the edge weights upper bounded by some value $w$ has an algorithm running in
$(2e)^{(k^{3/2}w^{1/2}+\bigO(k))\log_{2e}(k/w)}+\bigO(n^2\log n)$ time. \footnote{$n$ is always the number of vertices of the input graph unless otherwise mentioned.}
\end{theorem}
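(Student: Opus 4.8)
The plan is to first apply the kernelization of \Cref{thm:kernel} to reduce, in $\bigO(n^2\log n)$ time, to an equivalent \textsc{AWECP} instance with $N\le 4^k$ vertices; all remaining work happens on this kernel, and its cost is additive. I would then reformulate a solution as a binary \emph{incidence matrix} $M\in\{0,1\}^{N\times k}$ whose columns are the (at most $k$) cliques, so that the vertex rows satisfy $\langle M_u,M_v\rangle=w_{uv}$ for every edge $uv$ and $\langle M_u,M_v\rangle=0$ for every non-edge (and, for the vertices of $W$, the row weight equals the prescribed vertex weight). In this language a solution is exactly an assignment of a \emph{type} $t_v\in\{0,1\}^k$ — the set of cliques containing $v$ — to each vertex, consistent with these inner-product constraints. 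The key observation is that a solution is completely described by the \emph{set of distinct types that occur} together with the map sending each vertex to its type.

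The heart of the argument is a structural lemma bounding the complexity of the type set in \emph{some} optimal solution. Concretely, I would show that every yes-instance admits a solution in which (i) the number of distinct nonzero types is $\bigO(k)$, and (ii) the total support $\sum_{t}|t|$ over the distinct types is $\bigO(k^{3/2}w^{1/2})$. Claim~(i) I would obtain by taking a solution that minimizes a suitable potential (total support, or number of cliques) and arguing, by an exchange/linear-algebra argument inside the $k$-dimensional row space, that distinct types cannot accumulate beyond a linear amount. Claim~(ii) is where the weight bound $w$ must enter: since any two rows have inner product at most $w$, the types behave like near-orthogonal $0/1$ vectors, and a Cauchy--Schwarz estimate $\sum_t|t|\le\sqrt{p}\,\sqrt{\sum_t|t|^2}$ combined with a bound of the form $\sum_t|t|^2=\bigO(k^2w)$ — coming from $\sum_i\binom{c_i}{2}=\sum_e w_e$ and the cap $w$ on pairwise clique overlaps — would yield the target $\bigO(k^{3/2}w^{1/2})$.

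Given the lemma, the algorithm enumerates all candidate type-sets meeting (i) and (ii): a family of at most $\bigO(k)$ subsets of $[k]$ of total size at most $S=\bigO(k^{3/2}w^{1/2})$, i.e.\ a $0/1$ matrix with $\bigO(k)$ rows, $k$ columns, and at most $S$ ones. The number of such families is at most $\binom{\bigO(k^2)}{S}\le\bigl(\bigO(k/w)\bigr)^{S}=(2e)^{(k^{3/2}w^{1/2}+\bigO(k))\log_{2e}(k/w)}$. For each guessed type-set I would test in polynomial time whether the kernel vertices can be assigned these types consistently with all inner-product constraints — the type of a vertex being essentially pinned down by its weighted neighbourhood, so that the assignment reduces to a direct check (or a matching). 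Multiplying the number of guesses by the polynomial verification and adding the kernelization time gives the claimed running time; setting $w=1$ recovers \Cref{thm:ecp-algo} for \ecp.

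The main obstacle is the structural lemma, and specifically the total-support bound~(ii): coupling the two parameters as $k^{3/2}w^{1/2}$ is precisely what upgrades the previous $2^{\bigO(k^2)}$ exponent, and it is delicate because individual types may be dense — a single vertex can lie in all $k$ cliques, as already happens for a star — so only the \emph{aggregate} support, not the per-type support, can be controlled. Making the minimal-solution argument for~(i) and the Cauchy--Schwarz/overlap counting for~(ii) simultaneously rigorous, and ensuring that the per-guess verification is genuinely polynomial rather than a disguised second search, is where the real work lies.
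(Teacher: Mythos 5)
Your overall architecture (kernelize, recast a solution as a binary $N\times k$ incidence matrix with prescribed pairwise inner products, then pay only for guessing a sparse combinatorial object) matches the paper, and your counting step $\binom{\bigO(k^2)}{S}\le\bigl(\bigO(\sqrt{k/w})\bigr)^{S}$ is essentially the calculation in \cref{lem:basisenum}. But the structural lemma you hang everything on is false, and this is a genuine gap rather than a fixable technicality. Consider the unweighted instance built from $k$ cliques that pairwise intersect in distinct single vertices and contain nothing else: the vertex set is $\{v_{ij} : 1\le i<j\le k\}$, clique $C_i=\{v_{ij} : j\neq i\}$, and two vertices are adjacent iff their index pairs intersect. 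This is a YES-instance of ECP with $k$ cliques; the $\binom{k}{2}$ vertices are pairwise non-twins (e.g.\ $v_{12}$ and $v_{13}$ differ on $v_{24}$); and, by exactly the argument in the paper's block-counting lemma (which shows that $B_u=B_v$ forces $u$ and $v$ to be twins or isolated), \emph{every} solution assigns all $\binom{k}{2}$ vertices distinct types. So the number of distinct nonzero types is $\Theta(k^2)$, not $\bigO(k)$, and their total support is $\Theta(k^2)$, not $\bigO(k^{3/2})$ (here $w=1$). No choice of minimal or exchanged solution can repair this, because the types are forced by the instance; and if you weaken the lemma to what is actually true, enumerating all candidate type sets costs $2^{\Theta(k^2\log k)}$, which is worse than the bound you are trying to prove.

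The repair is precisely where the paper departs from your plan: the bound $k^{3/2}w^{1/2}+k$ (via the Zarankiewicz-type bound, \cref{lem:basisenum}) is applied not to the set of all distinct rows but only to a row \emph{basis} of the solution matrix --- at most $k$ rows whose pairwise dot products are at most $w$ because they are rows of a BSD (\cref{fact:sol-wlimited}). The remaining, possibly $\Theta(k^2)$-many, distinct rows are never guessed at all: they are filled greedily, one row at a time, with \emph{any} vector compatible with the rows filled so far, and the linear-algebra argument of \cref{claim:lin-ind-sub} shows this greedy completion cannot go wrong --- if row $i$ of the hidden solution $B^*$ lies in the span of basis rows consistent with $B^*$, then its inner products with every greedily filled row are forced to the correct values $A_{i,j}$, so a compatible vector always exists; and if no compatible vector exists, row $i$ is certifiably linearly independent of the current basis and is adopted as the next basis row. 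This incremental basis construction also sidesteps guessing \emph{which} vertices carry the basis rows, which on the kernel would cost another $\binom{4^k}{k}=2^{\Theta(k^2)}$ factor. It is also the honest answer to your final worry: the per-guess completion is not a disguised search, but neither is it polynomial --- it costs $2^{\bigO(k)}$ per guess on the kernel of size $4^k$, which is exactly what the $\bigO(k)$ term in the exponent of \cref{thm:algo} absorbs.
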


Note that \cref{thm:algo} implies an FPT algorithm for WECP when parameterized 
by $k$ as $w\le k$ for any YES-instance.
Also, \cref{thm:ecp-algo} follows from \cref{thm:algo} by setting $w=1$.

\subsection{Our techniques}\label{sec:techniques}

Our approach is based on the work of \citet{chandran2017parameterized}, who 
solve the Bipartite Biclique Partition problem using linear algebraic techniques: we 
express AWECP as a low-rank matrix decomposition problem. For this we allow 
matrices to have wildcard entries in the diagonal that will be denoted 
by~$\wild$. For values $a$ and $b$, we write $a\es b$ if and only if either $a=b$, or 
$a=\wild$, or~$b=\wild$. For two matrices $A$ and $B$, we write $A\es B$ if and 
only if $A_{i,j}\es B_{i,j}$ for all~$i,j$. We say that a matrix $B$ is a 
\emph{Binary Symmetric Decomposition (BSD)} of matrix~$A$ if $BB^T\es A$ and $B$ 
is binary. Further, the matrix $B$ is called a rank-$k$ BSD of $A$ if it is a BSD of $A$ and has at most $k$ columns.
We define the {\bf Binary Symmetric Decomposition with Diagonal Wildcards}
(BSD-DW) problem as follows

\defproblem{\large{BSD-DW} (Binary Symmetric Decomposition with Diagonal Wildcards)}
{an integer non-negative symmetric matrix $A\in 
\zs^{n\times n}$ such that the wildcards $\wild$ appear only in the diagonal, 
and an integer~$k$} 
{a \emph{rank-$k$ BSD} of $A$ (if it exists, otherwise output NO)} 

We prove (in \cref{lem:equiv}) that AWECP and BSD-DW are equivalent. 
Moreover, each column of $B$ (solution to BSD-DW) corresponds to a clique (in the solution to AWECP), i.e. the rows that have a $1$ in the $j$-th column correspond to the vertices that are in the $j$-th clique.
Due to this, we will index the rows and columns of $A$ with vertices, the rows of $B$ 
with vertices and the columns of $B$ with integers from~$[k]$, that correspond 
to the $k$ cliques. Moreover, we will be fluently switching between the contexts 
of edge partitionings of graphs (AWECP), and matrix decomposition (BSD-DW).

In \cref{sec:kernel} we prove that there is a kernel for AWECP with 
$4^k$ vertices. For this we group the vertices into equivalence classes (that we 
call \emph{blocks}) of twin vertices \footnote{our notion of twins is slightly different than 
the usual one in literature}. If a block has size more than~$2^k$, we show that they 
can be reduced and represented by one vertex. For this reduction rule, we need 
to specify how often the representative vertex needs to be covered by cliques. 
Thus, even if the input is an instance of WECP, the kernel we compute will be 
annotated, i.e., it will be an instance of AWECP. The $4^k$ bound on the kernel 
size follows then by giving a $2^k$ upper bound on the number of blocks for a 
YES instance. Since the edge weights and vertex weights for vertices in $W$ 
cannot exceed $k$ if there is a solution with at most $k$ cliques, a kernel with 
at most $4^k$ vertices can be encoded using $\bigO(\binom{4^k}{2}\log k)$ bits, 
and so \cref{thm:kernel} follows.

To obtain \cref{thm:algo}, we first compute a kernel using \cref{thm:kernel} as the
first step of the algorithm. Our algorithm will 
solve the more general AWECP problem. As in the algorithm of \citet{chandran2017parameterized} 
(where a different low-rank matrix decomposition problem is solved), the main 
idea of our algorithm is to guess a row basis for a rank-$k$ BSD~$B$, and then 
fill the remaining rows of $B$ one by one independent of each other. However we need to refine the techniques of 
\mbox{\citet{chandran2017parameterized}} in order to obtain our runtime 
improvement. In particular, there are two reasons why the algorithm 
in~\cite{chandran2017parameterized} has a quadratic dependence on $k$ in the 
exponent: first, to guess a basis of rank~$k$, they need to guess $k$ binary 
vectors of length $k$ each, which takes $\bigO(2^{k^2})$ time. But also, they 
need to guess the $k$ row basis indices of $B$,
for which there are $\binom{m}{k}$ possibilities if the matrix has $m$ 
rows. Since for Biparitie Biclique Partition there is a kernel where $m\leq 2^k$ 
\cite{fleischner2009covering}, this adds another factor of $\bigO(2^{k^2})$ to 
the runtime. 

To circumvent these two runtime bottlenecks, in \cref{sec:algorithm} we devise 
an algorithm that gets around guessing the row indices of the basis of the solution 
matrix $B$. Instead of guessing the whole basis, we add a row to the basis only when the current basis cannot \emph{take care} of that row. While this makes our algorithm more involved than the one by 
\citet{chandran2017parameterized}, it means that the only bottleneck left is 
guessing the basis entries. For BSD-DW we can show that a basis with only 
$k^{3/2}w^{1/2}+k$ ones exists, which follows from the well-studied 
Zarankiewicz problem~\cite{nikiforov2010contribution}. This bound on the 
structure of the basis then implies \cref{thm:algo}.

Since the only bottleneck, which prevents our algorithm from having near-linear 
dependence on $k$ in the exponent of the runtime, is the step that guesses the 
entries of the basis for the solution matrix~$B$, a natural question is 
whether our upper bound of $k^{3/2}w^{1/2}+k$  of the number of ones is (asymptotically) tight. In 
\cref{sec:fpp} we show that this is indeed tight (at least for the unweighted case) by proving the following theorem:
\begin{theorem}
	\label{thm:ones}
For every prime power $N$ and $k=N^2+N$ , there is a matrix $A\in \bin^{(k+1) \times (k+1)}$ such that there is a rank-$k$ BSD for $A$ and every row basis of every rank-$k$ BSD of $A$ has $\Theta(k^{3/2})$ ones.
\end{theorem}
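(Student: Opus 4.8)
The plan is to realise the extremal configuration behind the Zarankiewicz problem, namely the incidence structure of a finite projective plane, as a BSD-DW instance. I would fix the projective plane $PG(2,N)$ of order $N$: it has exactly $N^2+N+1=k+1$ points and $k+1$ lines, every line carries $N+1$ points, and any two points lie on a unique common line. Delete one line $\ell_0$, let $S$ be its $N+1$ points, and let $G$ be the graph on the $k+1$ points in which two points are adjacent exactly when they are \emph{not} both on $\ell_0$, so that $S$ is independent while the remaining $N^2$ points form a clique. I take $A$ to be the $(k+1)\times(k+1)$ matrix whose off-diagonal is the adjacency matrix of $G$ and whose diagonal entries are wildcards $\wild$ (as BSD-DW permits). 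To exhibit a rank-$k$ BSD I use the point--line incidence matrix with the column of $\ell_0$ deleted: each surviving line is a clique of $G$ (it meets $\ell_0$ in at most one point), and these $k$ cliques partition the edges of $G$ (each edge lies on its unique line, which is not $\ell_0$), so the resulting $(k+1)\times k$ binary matrix $B$ satisfies $BB^T\es A$. This witnesses existence and, since each of its rows has $N+1=\Theta(\sqrt k)$ ones, gives one basis with $\Theta(k^{3/2})$ ones; the theorem asserts that one cannot do better.

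For the upper bound $\bigO(k^{3/2})$ I would first record the \emph{linear} structure forced by $A$: all off-diagonal entries of $A$ lie in $\bin$, hence every off-diagonal entry of $BB^T$ is at most $1$, so any two rows of $B$ share a one in at most one column and $B$ contains no $2\times 2$ all-ones submatrix. A $K_{2,2}$-free $(k+1)\times k'$ binary matrix has only $\bigO(k^{3/2})$ ones by the K\H{o}v\'ari--S\'os--Tur\'an bound, so \emph{every} BSD, and a fortiori every row basis, has $\bigO(k^{3/2})$ ones.

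The real work is the matching lower bound $\Omega(k^{3/2})$, which I would assemble from two estimates. First, counting the $(N+1)N^2$ edges incident to $S$: each is covered by a unique clique containing its $S$-endpoint and at most one point of $S$, so summing the $V\setminus S$-sides of these cliques shows the rows indexed by $V\setminus S$ carry at least $(N+1)N^2=\Theta(k^{3/2})$ ones in total. Second, I would bound $\mathrm{rank}(B)$ from below: the principal submatrix of $BB^T$ on $V\setminus S$ equals $J+\mathrm{diag}(d_v-1)$ (with $J$ the all-ones matrix), where $d_v\ge 2$ is the number of cliques through $v$ (no single clique can cover all edges at $v$, as $S$ is independent); the matrix-determinant lemma shows this is nonsingular, whence $\mathrm{rank}(B)=\mathrm{rank}(BB^T)\ge N^2$ and every row basis omits at most $N+1$ of the $k+1$ rows. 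The step I expect to be the main obstacle is transferring the ``total ones'' estimate to an \emph{arbitrary} row basis, since an adversarial basis could in principle discard the $N+1$ densest rows and a single row may contain up to $k'$ ones. Here the linear structure rescues the argument: any $N+1$ rows have pairwise support-intersections at most $1$, so by inclusion--exclusion the sum of their sizes is at most $k'+\binom{N+1}{2}=\bigO(N^2)$. Thus deleting the omitted rows removes only $\bigO(N^2)=o(k^{3/2})$ ones, and every row basis retains $(N+1)N^2-\bigO(N^2)=\Theta(k^{3/2})$ ones, completing the proof.
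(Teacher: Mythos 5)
Your construction and existence argument coincide with the paper's: both take the split graph $G_N$ obtained from a projective plane of order $N$ by deleting one line $\ell_0$ (so $I=S$ is the independent set), and both exhibit a rank-$k$ BSD via the point--line incidence matrix with the column of $\ell_0$ removed. But your proof of the lower bound --- the heart of the theorem --- is genuinely different and, as far as I can check, correct. The paper proves a \emph{rigidity} statement: every edge clique partition of $G_N$ with at most $k$ cliques has all cliques of size exactly $N+1$ (\cref{lem:clique-size}) and, together with $I$, forms an FPP (\cref{lem:ECP-FPP}); it then invokes the nontrivial cited fact that FPP incidence matrices have full rank (\cref{fact:incidence}) to conclude that every rank-$k$ BSD $B$ has rank exactly $k$, so any row basis is $B$ minus a single row, and the count $k(N+1)-k$ follows. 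You avoid both the rigidity lemmas and the algebraic rank fact entirely: (i) you lower-bound the ones in the rows of $V\setminus S$ by $(N+1)N^2$ by counting edges incident to $S$, using only that columns are cliques meeting $S$ in at most one vertex; (ii) you get $\mathrm{rank}(B)\ge N^2$ because the principal submatrix of $BB^T$ on $V\setminus S$ equals $J+\mathrm{diag}(d_v-1)$ with each $d_v\ge 2$ (a vertex of the clique side cannot lie in a single clique, since it sees two nonadjacent vertices of $S$), hence is positive definite --- so any row basis omits at most $N+1$ rows; and (iii) the omitted rows carry only $\bigO(N^2)$ ones since distinct rows of $B$ have supports intersecting in at most one column, by Bonferroni. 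Your upper bound via K\H{o}v\'ari--S\'os--Tur\'an is the same Zarankiewicz bound the paper uses in \cref{lem:basisenum}. What each approach buys: the paper's route yields the stronger structural conclusion that every optimal solution \emph{is} a projective plane (rank exactly $k$, basis determined up to one row, exact constants), whereas yours is self-contained and elementary --- no appeal to the full-rank theorem for FPP incidence matrices --- at the cost of a weaker rank bound ($N^2$ rather than $k$) and looser constants, which still suffice for $\Theta(k^{3/2})$. One cosmetic slip: in your witness $B$ the rows indexed by points of $\ell_0$ have $N$ (not $N+1$) ones, but nothing in your argument depends on that claim.
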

While this does not give 
a runtime lower bound in general, it implies that in order to speed up our algorithm for ECP using a better enumeration of the potential basis matrices, one needs to use some property other than a bound on the number of ones. 
The tight instances are obtained via the well-known Finite Projective Planes.

\subsection{Related results}\label{sec:related}

We now survey some results for ECP and related problems, apart from those 
mentioned above. For ECP, it is also known that the problem is solvable in 
polynomial time on cubic graphs~\cite{fleischer2010edge}. The problem of 
partitioning the vertices instead of the edges into $k$ cliques is equivalent 
to $k$-colouring on the complement graph, which is well-known to be NP-hard even 
for $k=3$. Similarly, when the vertices need to be partitioned into bicliques or 
covered by bicliques, \citet{fleischner2009covering} proved NP-hardness for any 
constant $k\geq 3$.

\emph{Covering} the edges of a graph by cliques or bicliques turns out to be 
generally harder than \emph{partitioning} the edges. For the Edge Clique Cover problem, 
a kernel with $2^k$ vertices was shown by \citet{gramm2006data}, which 
results in a double-exponential time FPT algorithm when solving the kernel by 
brute-force. \citet{cygan2016known} showed that this is essentially best 
possible, as under ETH no $2^{2^{o(k)}}n^{O(1)}$ time algorithm exists for Edge 
Clique Cover and no kernel of size $2^{o(k)}$ exists unless $\P =\NP$. 
Similarly, for the Biclique Cover problem, where edges of a general graph need 
to be covered by bicliques, \mbox{\citet{fleischner2009covering}} gave a 
kernel with $3^k$ vertices, and for the Bipartite Biclique Cover problem they 
gave a kernel with $2^k$ vertices in each bipartition. These kernels naturally 
imply double-exponential time algorithms. 
\mbox{\citet{chandran2017parameterized}} proved that for Bipartite Biclique 
Cover, under ETH no $2^{2^{o(k)}}n^{O(1)}$ time algorithm exists, and unless $\P 
=\NP$ no kernel of size $2^{o(k)}$ exists.

\citet{chalermsook2014nearly} showed that for the Biclique 
Cover problem, it is NP-hard to compute an $n^{1-\eps}$-approximation  
for any~$\eps>0$
\footnote{The paper wrongly claims the same result also for Biclique Partition. The bug is acknowledged here: \url{https://sites.google.com/site/parinyachalermsook/research?authuser=0}}.
Edge Clique Cover is hard to approximate within $n^{0.5-\varepsilon}$ due to a reduction by \citet{Kou78}. In contrast, a PTAS exists for Edge Clique Cover on 
planar graphs~\cite{blanchette2012clique}.

\subsection{Preliminaries}
\label{sub:preliminaries}
For an $m\times n$ matrix $A$, we use $A_{i,j}$ to denote the entry of $A$ at row $i$ and column $j$.
We use $A_{i}$ to denote the row-vector given by the $i$-th row of $A$.
For some $I\subseteq [m]$ and $J\subseteq [n]$, we use $A_{I,J}$ to denote the sub-matrix of $A$ when restricted to rows with indices in $I$ and columns with indices in $J$.
Also, we use $A_I$ to denote a sub-matrix of $A$ when restricted to rows with indices in $I$. 
We call such a sub-matrix where only rows are restricted as row sub-matrix.
A \emph{row-basis} (or just \emph{basis} for brevity) $B$ of $A$ is any row sub-matrix of $A$ such that every row of $A$ can be expressed as a linear combination of rows of $B$, and the rows of $B$ are linearly independent with each other.

\begin{lemma}
\label{lem:equiv}
AWECP is equivalent to BSD-DW.
\end{lemma}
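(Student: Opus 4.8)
The plan is to establish the equivalence through the natural correspondence in which the columns of a binary matrix are the indicator vectors of the cliques. Given an AWECP instance $(G, w_e, W, w_v, k)$ on vertex set $V(G)$, I would build the matrix $A \in \zs^{V(G) \times V(G)}$ by setting, for distinct $u,v$, the entry $A_{u,v} = w_e(uv)$ when $uv \in E(G)$ and $A_{u,v} = 0$ otherwise, and setting the diagonal $A_{v,v} = w_v(v)$ when $v \in W$ and $A_{v,v} = \wild$ when $v \notin W$. This $A$ is symmetric, non-negative, and has wildcards only on the diagonal, so $(A,k)$ is a valid BSD-DW instance. Conversely, from a BSD-DW instance $A$ I would read off the graph $G$ with an edge $uv$ exactly when $A_{u,v} > 0$, edge-weights $w_e(uv) = A_{u,v}$, special set $W = \{v : A_{v,v} \neq \wild\}$, and vertex-weights $w_v(v) = A_{v,v}$ on $W$; both translations are clearly polynomial time. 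The single identity driving everything is that for any binary $B$ whose $j$-th column is the indicator of a vertex set $C_j$, the entry $(BB^T)_{u,v} = \sum_j B_{u,j} B_{v,j}$ counts the columns $j$ with $u,v \in C_j$, i.e., the number of sets $C_j$ containing both $u$ and $v$ when $u \neq v$, and the number of sets containing $v$ when $u = v$.

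For the forward direction, I would take a solution of AWECP consisting of cliques $C_1, \dots, C_{k'}$ with $k' \le k$ and let $B$ be the $V(G) \times k'$ binary matrix whose $j$-th column indicates $C_j$. Using the identity, every off-diagonal entry of $BB^T$ equals the number of cliques covering the corresponding pair; for an edge this is $w_e(uv) = A_{u,v}$ by the covering condition, and for a non-edge it is $0 = A_{u,v}$ because a clique of $G$ never contains two non-adjacent vertices. On the diagonal, the entry equals the number of cliques through $v$, which is $w_v(v) = A_{v,v}$ for $v \in W$, while for $v \notin W$ the target $A_{v,v} = \wild$ matches anything. Hence $BB^T \es A$ and $B$ is a rank-$k$ BSD, since it has $k' \le k$ columns.

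For the backward direction, I would start from a rank-$k$ BSD $B$ of $A$, read the sets $C_j = \{v : B_{v,j} = 1\}$ off its (at most $k$) columns, and discard empty ones, which cover nothing. The crucial point, and the step I expect to be the only real obstacle, is to verify that each $C_j$ is genuinely a clique of $G$; this is exactly where it matters that the wildcards are confined to the diagonal. Indeed, if distinct $u,v$ both lie in $C_j$, then $(BB^T)_{u,v} \ge 1$, and since the off-diagonal entry $A_{u,v}$ is never a wildcard, $BB^T \es A$ forces $A_{u,v} = (BB^T)_{u,v} \ge 1$, so $A_{u,v} > 0$ and $uv \in E(G)$. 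With the $C_j$ known to be cliques, the same identity shows each edge $uv$ is covered exactly $A_{u,v} = w_e(uv)$ times and each $v \in W$ lies in exactly $A_{v,v} = w_v(v)$ cliques, so the $C_j$ form a valid AWECP solution. Since both directions use the identical correspondence $B \leftrightarrow \{C_j\}$, it is in fact a solution-preserving bijection, which I would record explicitly because later parts of the paper switch freely between the edge-partition and matrix-decomposition viewpoints.
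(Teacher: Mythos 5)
Your proposal is correct and follows essentially the same route as the paper's own proof: the identical instance translation (edge weights off-diagonal, vertex weights or wildcards on the diagonal), the identical column-indicator correspondence between cliques and the matrix $B$, and the same key observation that off-diagonal entries are never wildcards, which forces $A_{u,v}\ge 1$ and hence an edge whenever two vertices share a column. The only cosmetic difference is that you allow $k'\le k$ columns and discard empty cliques, whereas the paper pads to exactly $k$ columns; this changes nothing of substance.
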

\begin{proof}
	Given an instance $(G,w_e,W,w_v,k)$ of AWECP, we can construct an instance of $(A,k)$ of 
BSD-DW as follows. Let $V(G)=\{1,\ldots n\}$; take the non-diagonal entries of 
$A$ as the corresponding entries of the \emph{weighted adjacency matrix} of $G$, 
i.e., if there is an edge between two vertices $u$ and $v$,
the entry $A_{u,v}$ 
is equal to $w_e(uv)$ and if $u$ and $v$ do not have an edge between them then $A_{u,v}=0$; for every vertex $v\in W$, take $A_{v,v}$ as 
the vertex weight of $v$;  for every vertex $v\in V(G)\setminus W$, take 
$A_{v,v}$ as the wildcard $\wild$. Note that the mapping is invertible, i.e., 
given a BSD-DW instance $(A,k)$ we get an AWECP instance $(G,w_e,W,w_v,k)$ as follows. 
Take $V(G):=\left\{ 1,2,\cdots,n \right\}$ where $n$ is the number of rows (and 
columns) of $A$. For distinct $u,v\in [n]$, if $A_{u,v}$ is non-zero, put an 
edge between $u$ and $v$ in $G$ with weight $A_{u,v}$. For each $v\in [n]$ such 
that $A_{v,v}$ is not a wildcard, put $v$ in $W$ and set its vertex weight to 
$A_{v,v}$. It is clear that this mapping is a bijective mapping between AWECP 
and BSD-DW instances.

Now, we define a bijective mapping between candidate solutions of the two 
problems. Naturally, a candidate solution of AWECP is a set of $k$ cliques and a 
candidate solution of BSD-DW is an $n\times k$ matrix. Consider a candidate 
solution $\calC:=\left\{ C_1,C_2,\cdots C_k \right\}$ of an AWECP instance 
$(G,w_e,W,w_v,k)$. We map it to a candidate solution $B\in\bin^{n\times k}$ of a BSD-DW 
instance $(A,k)$ as follows. Take the row $B_u$ as the characteristic vector of 
$u$ in the $k$ cliques, i.e., $B_{u,j}:=1$ if $u\in C_j$, and $B_{u,j}:=0$ 
otherwise. The inverse mapping then turns out to be as follows. Given a 
candidate solution $B\in\bin^{n\times k}$ of instance $(A,k)$ construct $k$ 
cliques where the $j$-th clique is $C_j:=\left\{ u\mid B_{u,j}=1 \right\}$. To 
see that $C_j$ is indeed a clique, consider any two vertices $u,v\in C_j$: 
since 
$B_{u,j}=B_{v,j}=1$, we know that $A_{u,v}=B_uB_v^T\ge 1$, which implies that 
there is an edge between $u$ and $v$ in $G$.

First, we prove that if $\calC$ is a solution of 
AWECP$(G,w_e,W,w_v,k)$, then 
 $B$ is a solution of BSD-DW$(A,k)$. It is clear that $B$ has only $k$ columns by construction. So, it only remains to 
prove that for all pairs $u,v\in [n]$, $B_uB_v^T\es A_{u,v}$. First consider 
the 
case when $u$ and $v$ are distinct. Let $J$ denote the set of all $j$ such that 
both $u$ and $v$ appear together in $C_j$. Since $\calC$ is a solution of 
AWECP$(G,w_e,W,w_v,k)$, we have that $|J|=A_{u,v}$. By construction of $B$, we have that 
$J$ is exactly the set of indices $j$ where $B_{u,j}=B_{v,j}=1$. Thus 
$B_uB_v^T=|J|=A_{u,v}$. Now consider the case when $u=v$. If $A_{u,u}$ is a 
$\wild$ then clearly $B_uB_u^T \es \wild=A_{u,u}$. So, suppose $A_{u,u}\neq 
\wild$. This means $u\in W$ implying that $u$ appears in exactly $A_{u,u}$ many 
cliques in $\calC$. Thus $B_uB_u^T=A_{u,u}$.

	It only remains to prove that if $B$ is a solution of BSD-DW$(A,k)$,
	then $\calC$ is a solution of $(G,w_e,W,w_v,k)$, which we do now.
By construction, $\calC$ has at most $k$ cliques.
Thus, it is 
sufficient to prove the following two statements: (1) every pair $u,v\in V(G)$ 
appears together in exactly $A_{u,v}$ many cliques in $\calC$ (2) each vertex 
$v\in W$ appears in $A_{v,v}$ many cliques in $\calC$. First we prove (1). We 
know $B_uB_v^T=A_{u,v}$. Since $B$ is binary, this means that there are exactly 
$A_{u,v}$ many indices $j$ such that $B_{u,j}$ and $B_{v,j}$ are both~$1$. Let 
$J$ be the set of those indices. Observe that the set of cliques where both $u$ 
and $v$ appear together are exactly $\{C_j:j\in J\}$. Thus, the edge $uv$ is in 
$|J|=A_{u,v}$ many cliques. Now we prove (2). Consider a vertex $v\in W$. We 
know $B_vB_v^T=A_{v,v}$. Since $B$ is binary, this means that there are exactly 
$A_{v,v}$ many ones in $B_{v}$. Thus, the vertex $v$ is in $A_{v,v}$ many 
cliques.
\end{proof}

\section{Kernel} 
\label{sec:kernel}
We will now give a kernel for AWECP and BSD-DW, thereby proving Theorem~\ref{thm:kernel}.
Let $G$ be the input graph to AWECP and $A$ be the corresponding input matrix to BSD-DW 
obtained by the transformation as in the proof of \cref{lem:equiv}.
We may move seemlessly between the graph and matrix terminologies as both problems are 
equivalent.
Whenever we say a solution in this section, we mean the solution to the BSD-DW instance i.e., a rank-$k$ BSD of $A$.
We say two distinct vertices $u$ and $v$ are {\bf twins} if they are adjacent and satisfy $A_u\es 
A_v$.
We first prove the following easy property of twins.

\begin{lemma}
	\label{lem:trans}
	For distinct vertices $u,v$ and $w$, suppose $u$ and $v$ are twins and $v$ and 
	$w$ are twins. Then: 
	\begin{enumerate}
		\item $u$ and $w$ are twins, and
\item
	all the entries of the submatrix $A_{\{u,v,w\},\{u,v,w\}}$ are the same 
except for wildcards.
	\end{enumerate}
\end{lemma}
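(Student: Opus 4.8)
The plan is to exploit the hypothesis that wildcards occur only on the diagonal of $A$: every off-diagonal entry is a genuine integer, so for distinct indices $i\neq j$ the relation $A_{i,j}\es A_{i',j}$ collapses to ordinary equality whenever both positions are off-diagonal. The one subtlety is that $\es$ is \emph{not} transitive (for instance $1\es\wild$ and $\wild\es 2$ but $1\not\es 2$), so I cannot simply chain $A_u\es A_v$ and $A_v\es A_w$ into $A_u\es A_w$. Instead I will verify $A_u\es A_w$ one column at a time, and use the symmetry of $A$ to get past the single problematic column.

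For part~1 I first record two equalities coming from off-diagonal positions. Reading $A_u\es A_v$ in column $w$ (off-diagonal, since $w\notin\{u,v\}$) gives $A_{u,w}=A_{v,w}$; reading $A_v\es A_w$ in column $u$ gives $A_{v,u}=A_{w,u}$. Since $v$ and $w$ are twins, $A_{v,w}\ge 1$, so $A_{u,w}=A_{v,w}\ge 1$ and hence $u,w$ are adjacent. To check $A_u\es A_w$ I go through the columns $j$: for $j\notin\{u,v,w\}$ both hypotheses reduce to plain equalities $A_{u,j}=A_{v,j}=A_{w,j}$; for $j=u$ and $j=w$ a possible wildcard sits on the $A_u$- respectively $A_w$-side and makes $\es$ hold automatically, while if that diagonal entry is genuine the same off-diagonal equalities force it to match. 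The delicate column is $j=v$, where both $A_{u,v}$ and $A_{w,v}$ are genuine but the intermediate entry $A_{v,v}$ may be a wildcard, so transitivity through $v$ can fail. Here I use symmetry: $A_{u,v}=A_{v,u}=A_{w,u}=A_{u,w}=A_{v,w}=A_{w,v}$, where the two middle equalities are exactly the recorded relations and the outer ones are symmetry of $A$. This yields $A_{u,v}=A_{w,v}$ outright, independent of the value of $A_{v,v}$, completing $A_u\es A_w$.

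The chain in the previous paragraph already shows that the three off-diagonal values $A_{u,v}$, $A_{u,w}$, $A_{v,w}$ all coincide; call the common value $t$, noting $t\ge 1$. Part~2 then reduces to the diagonal. For each $x\in\{u,v,w\}$ whose diagonal entry is not a wildcard I read the appropriate twin relation in column $x$ against a neighbour whose entry is $t$: e.g. $A_{u,u}\es A_{v,u}=t$ forces $A_{u,u}=t$ since $A_{u,u}\neq\wild$, and symmetrically $A_{v,v}=t$ and $A_{w,w}=t$ when those are genuine. Thus every non-wildcard entry of $A_{\{u,v,w\},\{u,v,w\}}$ equals $t$. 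I expect the only genuine obstacle to be the non-transitivity of $\es$ at column $v$, which is precisely why the symmetry of $A$ (rather than the wildcard-on-the-diagonal property alone) is the crucial ingredient.
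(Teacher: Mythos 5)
Your proof is correct and follows essentially the same argument as the paper's: both chain off-diagonal equalities obtained from the two twin relations (together with the symmetry of $A$) to show that all off-diagonal entries of $A_{\{u,v,w\},\{u,v,w\}}$ coincide, and then handle the diagonal entries separately via the wildcard convention. The only difference is organizational --- the paper proves part~2 first and derives part~1 from it, whereas you verify part~1 column by column and then read off part~2 --- and you are somewhat more explicit about the adjacency of $u$ and $w$ and about exactly where symmetry is used.
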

\begin{proof}
First, let us prove the second statement. Let $A_{u,v}=\alpha$. Then we know 
$A_{u,w}=\alpha$ as $v$ and $w$ are twins. Then $A_{v,w}=\alpha$ as $u$ and $v$ 
are twins. Thus all the non-diagonal elements of $A_{\{u,v,w\}\{u,v,w\}}$ are 
equal to $\alpha$. If $A_{u,u}\neq \wild$ then $A_{u,u}=A_{v,u}=\alpha$ as $u$ 
and $v$ are twins. Similarly, if $A_{v,v}\neq \wild$ then 
$A_{v,v}=A_{v,u}=\alpha$ as $u$ and $v$ are twins. And, if $A_{w,w}\neq \wild$ 
then $A_{w,w}=A_{v,w}=\alpha$ as $v$ and $w$ are twins.

Now, for the first statement to hold, we only need to show that 
$A_{u,z}=A_{w,z}$ for all $z\notin \left\{ u,v,w \right\}$.
Indeed, $A_{u,z}=A_{v,z}=A_{w,z}$ where the first equality is because $u$ and 
$v$ are twins and the second is 
because $v$ and $w$ are twins.
\end{proof}

Thus we have that the relation \emph{twins} is transitive. It is also symmetric, as easily seen from the definition.
To make it also reflexive, we consider a vertex to be twin of itself.
Thus, we can group the vertices into equivalence classes of twins.
We call each equivalence class a {\bf block}.
Note that there can be blocks containing only a single vertex.
The following lemma is a direct consequence of \cref{lem:trans}.
\begin{lemma}
	\label{lem:block-same}
	For a block $D$, the entries of the sub-matrix $A_{D,D}$ are all same 
except for wildcards.
\end{lemma}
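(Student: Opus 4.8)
The plan is to fix a single value $\alpha$ and show that every non-wildcard entry of $A_{D,D}$ equals $\alpha$, handling the easy cases $|D|\le 2$ separately and using \cref{lem:trans} as the engine for $|D|\ge 3$. If $|D|=1$ there is at most one entry (the diagonal one), so the claim is immediate; if $|D|=2$ there is only a single off-diagonal value $A_{a,b}=A_{b,a}$, and any non-wildcard diagonal entry is forced to equal it by the twin condition. So the real work is the case $|D|\ge 3$, where I would first pin down the off-diagonal entries and then the diagonal ones.

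For the off-diagonal entries I would argue that they are all equal. Since any two distinct vertices in a block are twins and hence adjacent, every off-diagonal entry $A_{x,y}$ with $x,y\in D$ is a positive value, and in any case wildcards occur only on the diagonal. Given two such entries that share a vertex, say $A_{x,y}$ and $A_{x,y'}$ with $x,y,y'$ pairwise distinct, I would apply the second part of \cref{lem:trans} to the triple $\{y,x,y'\}$ (here $y,x$ are twins and $x,y'$ are twins), concluding that all entries of $A_{\{x,y,y'\},\{x,y,y'\}}$ agree up to wildcards and in particular $A_{x,y}=A_{x,y'}$. Two off-diagonal entries sharing no vertex are then linked by a third entry sharing a vertex with each, so by this chaining all off-diagonal entries equal a common value $\alpha$.

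It then remains to handle the diagonal. For $x\in D$ with $A_{x,x}\neq\wild$, I would pick any other $z\in D$ (which exists as $|D|\ge 2$); since $x$ and $z$ are twins we have $A_x\es A_z$, and reading this off at coordinate $x$ gives $A_{x,x}\es A_{z,x}$. As $A_{z,x}=A_{x,z}=\alpha$ is non-wildcard and $A_{x,x}$ is assumed non-wildcard, the relation $\es$ forces $A_{x,x}=\alpha$. Hence every non-wildcard entry of $A_{D,D}$ equals $\alpha$, as required.

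The step I expect to be the main obstacle is the patching in the off-diagonal case: \cref{lem:trans} only controls one $3\times 3$ submatrix at a time, so the care lies in verifying that the shared-vertex triples genuinely cover all pairs of off-diagonal entries and that the chaining argument is valid, together with checking the degenerate cases $|D|\le 2$ so that a reference pair $(a,b)$ and a partner $z$ always exist.
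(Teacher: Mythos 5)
Your proof is correct and follows exactly the route the paper intends: the paper states \cref{lem:block-same} as a direct consequence of \cref{lem:trans}, and your argument is precisely that consequence spelled out, using part 2 of \cref{lem:trans} on shared-vertex triples, chaining to cover disjoint pairs, and invoking the twin condition $A_x \es A_z$ for the non-wildcard diagonal entries. The degenerate cases $|D|\le 2$ and the chaining details you worried about all check out.
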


\begin{fact}
	\label{fact:es}
	For values $a,b$ and $c$, if $a\es b$ and $b\es c$, and $b\neq \wild$ then 
$a\es c$.
\end{fact}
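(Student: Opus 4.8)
The statement to prove is \cref{fact:es}: for values $a,b,c$, if $a\es b$ and $b\es c$ and $b\neq\wild$, then $a\es c$. My plan is to simply unfold the definition of $\es$ and check the cases, exploiting the hypothesis $b\neq\wild$ to eliminate the troublesome branches.

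Recall that $x\es y$ means either $x=y$, or $x=\wild$, or $y=\wild$. First I would analyze the relation $a\es b$ under the constraint $b\neq\wild$. Since $b$ is not a wildcard, the third disjunct ($b=\wild$) is impossible, so $a\es b$ forces either $a=b$ or $a=\wild$. Symmetrically, from $b\es c$ and $b\neq\wild$, the disjunct $b=\wild$ is impossible, so we get either $b=c$ or $c=\wild$.

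Now I would combine these into a short case split. If $a=\wild$, then $a\es c$ holds immediately by the definition (first argument is a wildcard), regardless of $c$. Likewise, if $c=\wild$, then $a\es c$ holds immediately (second argument is a wildcard). The only remaining case is $a=b$ and $b=c$, which gives $a=c$ and hence $a\es c$. Thus in every case $a\es c$, which is exactly what is required. The proof is entirely routine; there is essentially no obstacle, since the hypothesis $b\neq\wild$ is precisely what prevents the transitivity of $\es$ from failing (transitivity can fail in general, e.g.\ $a\es\wild$ and $\wild\es c$ do not imply $a\es c$), and removing that middle wildcard makes the argument a two-line case check.
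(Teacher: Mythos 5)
Your proof is correct: the paper states \cref{fact:es} without any proof, treating it as immediate, and your argument---using $b\neq\wild$ to reduce $a\es b$ to ($a=b$ or $a=\wild$) and $b\es c$ to ($b=c$ or $c=\wild$), then checking the three resulting cases---is exactly the routine verification intended. Your closing remark about why $b\neq\wild$ is needed (transitivity of $\es$ fails when the middle value is a wildcard) correctly identifies the point of the hypothesis.
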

\begin{lemma}
Suppose we have a YES instance of AWECP without isolated vertices.
Then there can be at most $2^k$ blocks.
\end{lemma}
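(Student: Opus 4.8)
The plan is to pass to the equivalent BSD-DW formulation via \cref{lem:equiv}. Since we start from a YES instance without isolated vertices, there is a rank-$k$ BSD $B$ of the corresponding matrix $A$, i.e.\ a binary $n\times k$ matrix with $BB^T\es A$. Each row $B_v$ lies in $\bin^k$, so there are at most $2^k$ possible row values. The core of the argument is to show that any two distinct vertices sharing the same row in $B$ must be twins, and hence lie in the same block; this makes the assignment $v\mapsto B_v$ injective on the set of blocks, so the number of blocks is bounded by the number of available binary $k$-vectors, namely $2^k$.

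So suppose $u,v$ are distinct with $B_u=B_v$, and I would verify the two clauses of the twin definition. For the condition $A_u\es A_v$ I would argue coordinate by coordinate. For any $z\notin\{u,v\}$ the entry $A_{u,z}$ is off-diagonal and therefore not a wildcard, so $BB^T\es A$ forces the exact equality $A_{u,z}=B_uB_z^T$; the same holds for $v$, and since $B_u=B_v$ we get $A_{u,z}=A_{v,z}$, giving agreement away from the two special coordinates. At the coordinate $z=u$ I would check $A_{u,u}\es A_{u,v}$: if $A_{u,u}=\wild$ this is trivial, and otherwise $A_{u,u}=B_uB_u^T=B_uB_v^T=A_{u,v}$; the coordinate $z=v$ is symmetric, using $A_{v,v}=B_vB_v^T=B_uB_v^T=A_{u,v}$. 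This establishes $A_u\es A_v$.

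The remaining clause is adjacency, and this is exactly where the no-isolated-vertices hypothesis enters. Since $B_u=B_v$ we have $A_{u,v}=B_uB_v^T=B_uB_u^T$, which equals the number of ones in $B_u$. If this were $0$, then $B_u=\zerov$ and hence $A_{u,z}=B_uB_z^T=0$ for every $z$, making $u$ isolated and contradicting the assumption. Thus $B_u\neq\zerov$, so $A_{u,v}\geq 1$ and $u,v$ are adjacent; together with $A_u\es A_v$ they are twins, so they share a block.

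With this established, the count is immediate: two vertices from different blocks cannot have equal rows, so selecting one vertex per block yields distinct vectors in $\bin^k$, and therefore the number of blocks is at most $2^k$. I do not anticipate any serious obstacle beyond the careful case analysis at the two diagonal coordinates, where the presence of wildcards has to be handled; the conceptual crux is simply that the rows of a rank-$k$ BSD furnish an encoding of the vertices into $\bin^k$ that is injective across blocks, and that ruling out the all-zero row (via the absence of isolated vertices) is precisely what makes the adjacency clause of the twin definition hold.
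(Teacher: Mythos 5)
Your proof is correct and follows essentially the same approach as the paper: both show that $B_u=B_v$ forces $u$ and $v$ to be twins (so the map from blocks to rows of $B$ is injective into $\bin^k$), and both use the absence of isolated vertices in exactly the same way, to rule out the all-zero row as the only obstruction to adjacency. The only cosmetic difference is that you verify $A_u\es A_v$ coordinate by coordinate, whereas the paper invokes its transitivity fact for $\es$ through the wildcard-free vector $B_uB^T$.
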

\begin{proof}
	Let $B$ be a rank-$k$ BSD of $A$.
	Note that $B$ exists as we have a YES instance.
	In order to prove the lemma, it is sufficient to show that if $u$ and $v$ are in different blocks, then 
$B_u$ and $B_v$ are distinct, because then there can only be $2^k$ distinct rows of 
$B$, as there are only $k$ columns in $B$ and $B$ is binary. We will prove the 
contrapositive, i.e., we wil show that if $B_u=B_v$ then $u$ and $v$ are in the same 
block. 
Assume for the sake of contradiction that $B_u=B_v$ and $u$ and $v$ are in different blocks, 
i.e., they are not twins. Let $b:=B_uB^T=B_vB^T$. We have $A_u\es 
B_uB^T= b$ and $A_v\es B_vB^T=b$. This implies $A_u\es A_v$ 
using 
\cref{fact:es}, as the vector $b$ contains no wildcards. Then, for $u$ 
and $v$ to be not twins, it should be the case that $u$ and $v$ are not 
adjacent, i.e, $A_{u,v}=0$. But then, $B_{u}B_{v}^T=0$. 
Since $B_u=B_v$ by assumption, we have that 
 $B_u=B_v=\mathbf{0}$ and hence $A_u=A_v=\mathbf{0}$. 
 This means that $u$ and $v$ are isolated vertices, which is a contradiction.
\end{proof}

The above lemma shows the soundness of our first reduction rule that is as follows.
\begin{reduction}
\label{rul:blocks}
If the number of blocks is more than $2^k$, output that the instance is a NO 
instance. 
\end{reduction}

Next, we prove the following lemma about twins that helps us to come up with a reduction rule that bounds the size of each block.

\begin{lemma}\label{lem:alldist}
Let $D:=\{v_1,v_2,\dots,v_t\}$ be a block of twins. For a YES instance, there 
exists a solution $B$ such that the rows $B_{v_1},B_{v_2},\dots B_{v_t}$ are 
either all pairwise distinct, or all same.
\end{lemma}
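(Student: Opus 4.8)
The plan is to take an arbitrary solution and argue that it is either already of the first type (all block rows pairwise distinct) or can be transformed into the second type (all block rows identical); there is no intermediate case to rule out. We may assume $t\ge 2$, since for a single vertex the claim is vacuous. Write $\alpha$ for the common off-diagonal value of the submatrix $A_{D,D}$ guaranteed by \cref{lem:block-same}; since the vertices of $D$ are pairwise adjacent we have $\alpha\ge 1$, and the same lemma tells us that each diagonal entry $A_{v_i,v_i}$ is either $\wild$ or equal to $\alpha$. Fix any rank-$k$ BSD $B$ of $A$. Because the off-diagonal entries of $A$ carry no wildcards, for every $i\neq j$ the defining relation $B_{v_i}B_{v_j}^T\es A_{v_i,v_j}$ is in fact an equality $B_{v_i}B_{v_j}^T=\alpha$.

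If the rows $B_{v_1},\dots,B_{v_t}$ are already pairwise distinct, we are done, so suppose $B_{v_p}=B_{v_q}=:b$ for some $p\neq q$. Then $bb^T=B_{v_p}B_{v_q}^T=\alpha$, so $b$ has exactly $\alpha$ ones. I would then build a new matrix $B'$ from $B$ by overwriting every block row $B_{v_i}$ (for $i\in[t]$) with the single vector $b$ and leaving all other rows untouched. Since we add no columns, it remains only to verify that $B'$ is still a rank-$k$ BSD of $A$, i.e.\ that $B'_u(B'_w)^T\es A_{u,w}$ for all pairs $u,w$.

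This verification splits into the natural cases. When neither $u$ nor $w$ lies in $D$, the rows are unchanged and there is nothing to check. When $u\in D$ and $w\notin D$, we have $B'_u(B'_w)^T=b\,B_w^T=B_{v_p}B_w^T=A_{v_p,w}$ (using that $B$ solves the instance and this entry, being off-diagonal, is not a wildcard); and since $u$ and $v_p$ are twins while $w\notin D$, the off-diagonal relation $A_{v_p,w}\es A_{u,w}$ forces $A_{v_p,w}=A_{u,w}$, as required. When $u,w\in D$ are distinct, $B'_u(B'_w)^T=bb^T=\alpha=A_{u,w}$. Finally, for a diagonal pair $u=w\in D$ we get $B'_u(B'_u)^T=bb^T=\alpha$, which satisfies $\alpha\es A_{u,u}$ precisely because $A_{u,u}\in\{\wild,\alpha\}$. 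Hence $B'$ is a valid solution in which all block rows coincide.

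The step that requires care — and which I expect to be the crux — is the middle verification: transferring the value $A_{v_p,w}=A_{u,w}$ for $w$ outside the block relies on the twin relation together with the fact that off-diagonal entries never hold wildcards, while the diagonal check relies critically on \cref{lem:block-same} forcing $A_{u,u}$ to be either a wildcard or exactly $\alpha$. Everything else is bookkeeping over the case split.
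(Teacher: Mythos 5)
Your proposal is correct and follows essentially the same route as the paper's own proof: take a solution in which two block rows coincide, overwrite all block rows with that common vector, and verify the BSD property by the same three-way case split (partner outside the block, distinct partner inside the block, diagonal entry), using \cref{lem:block-same} and the fact that wildcards occur only on the diagonal. The only differences are cosmetic — you work with an arbitrary coinciding pair $v_p,v_q$ rather than relabelling to $v_1,v_2$, and you spell out the wildcard-free off-diagonal argument that the paper leaves implicit.
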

\begin{proof}
It is sufficient to prove the following statement: if there is a solution $B$ 
such that $B_{v_1}=B_{v_2}$, then there is also a solution $C$ such that 
$C_{v_1}=C_{v_2}=\dots = C_{v_t}$. So, assume that $B_{v_1}=B_{v_2}$. 
Let $C$ be the matrix defined as $C_{v}:=B_{v}$ for all $v\notin D$, and 
$C_{v}:= B_{v_1}=B_{v_2}$ for all $v\in D$. We will prove that $C$ is also a 
solution. For this, it is sufficient to prove that $C_{u}C_{v}^T=A_{u,v}$ 
for all $u,v\in V$ such that $A_{u,v}\neq \wild$. If both $u$ and $v$ are not 
in $D$, then $C_{u}C_{v}^T=B_{u}B_{v}^T=A_{u,v}$. So, without loss of 
generality assume that $u\in D$. We distinguish the following cases.
\begin{enumerate}
 \item 	If $v\in V\setminus D$, then 
$C_{u}C_{v}^T=B_{v_1}B_{v}^T=A_{v_1,v}=A_{u,v}$, where the last equality 
follows as $v_1$ and $u$ are twins.
\item If $v\in D\setminus\{u\}$, then
$C_{u}C_{v}^T=B_{v_1}B_{v_2}^T=A_{v_1,v_2}=A_{u,v}$, where the last 
equality follows from \cref{lem:block-same}. 
\item If $v=u$: if $A_{u,u}=\wild$ then there is nothing to prove, so assume 
$A_{u,u}\neq \wild$. Then $A_{u,u}=A_{v_1,v_2}$ by \cref{lem:block-same}. 
Hence we get $C_{u}C_{u}^T=B_{v_1}B_{v_2}^T=A_{v_1,v_2}=A_{u,u}$.\qedhere
\end{enumerate}
\end{proof}

Since there are 
only $2^k$ possible distinct rows for a solution $B$, \cref{lem:alldist} has the following consequence.

\begin{lemma}\label{lem:allsame}
Let $D:=\{v_1,v_2,\dots,v_t\}$ be a block of twins such that $t>2^k$. For a YES instance, there 
exists a solution $B$ such that the rows $B_{v_1},B_{v_2},\dots B_{v_t}$ are 
all same.
\end{lemma}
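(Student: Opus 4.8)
The plan is to combine \cref{lem:alldist} with \cref{rul:blocks} (or rather with the $2^k$ bound on the number of distinct rows) to rule out the ``all pairwise distinct'' alternative offered by \cref{lem:alldist}. First I would invoke \cref{lem:alldist}: since $D$ is a block of twins and the instance is a YES instance, there exists a solution $B$ such that the rows $B_{v_1},\dots,B_{v_t}$ are either all pairwise distinct or all the same. It therefore suffices to show that the ``all pairwise distinct'' case cannot occur under the hypothesis $t>2^k$.

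The key observation is a simple counting argument. Each row $B_{v_i}$ is a binary vector of length $k$, since $B$ has exactly $k$ columns and is binary. Hence there are at most $2^k$ distinct possible values for any row of $B$. If the rows $B_{v_1},\dots,B_{v_t}$ were all pairwise distinct, they would constitute $t$ distinct binary vectors of length $k$, forcing $t\le 2^k$. This contradicts the assumption that $t>2^k$. Consequently, the first alternative of \cref{lem:alldist} is impossible, and the solution $B$ guaranteed by that lemma must fall into the second alternative, i.e., the rows $B_{v_1},\dots,B_{v_t}$ are all the same. This is exactly the conclusion we want.

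I do not anticipate any real obstacle here, as the statement is a direct corollary: the only ingredients are the dichotomy from \cref{lem:alldist} and the pigeonhole bound of $2^k$ distinct binary length-$k$ rows. The one point to state carefully is that the solution witnessing the conclusion is precisely the solution produced by \cref{lem:alldist} (we are not claiming every solution has all rows equal, only that one such solution exists), so the existential quantifier should be handled by naming $B$ as the solution from \cref{lem:alldist} and then eliminating the pairwise-distinct branch. Thus the proof is essentially two sentences: apply \cref{lem:alldist}, then discard the distinct case by counting.
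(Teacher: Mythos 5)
Your proof is correct and matches the paper's own reasoning exactly: the paper derives \cref{lem:allsame} as an immediate consequence of \cref{lem:alldist} together with the observation that a binary matrix with $k$ columns has at most $2^k$ distinct rows, which is precisely your pigeonhole argument ruling out the pairwise-distinct branch. No gaps.
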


The above lemma suggests that for a block $D$ of size more than $2^k$, we only need to keep 
one representative vertex for all the vertices in $D$.
This leads us to our second reduction rule.
\begin{reduction}\label{rul:block-size}
Suppose there is a block $D$ with more than $2^k$ vertices. Pick any two 
arbitrary vertices $u,v\in D$. 
We reduce our instance to an  instance $A'$ of AWECP (simulataneously to an instance $G'$ of BSD-DW) as follows: 
let $G':=G\setminus (D\setminus \left\{ v \right\})$;
for every pair $(v_1,v_2)\neq (v,v)$ in $V(G')\times V(G')$, let $A'_{v_1,v_2}:=A_{v_1,v_2}$; 
let $A'_{v,v}:=A_{u,v}$.

Once we have a solution $B'$ to the reduced instance $A'$ then we construct a solution $B$ to the original instance $A$ as follows: 
for all $x\in D$, let $B_x:=B'_v$; for all $x\in V(G)\setminus D$, let $B_x:=B'_x$.
\end{reduction}
Now, we prove that the above reduction rule is safe.
\begin{lemma}
	\label{lem:safeness}
	Let $A',G',B',B$ be as defined in \cref{rul:block-size}. 
	\begin{enumerate}
		\item
	If $B'$ is a rank-$k$ BSD of $A'$, then $B$ is a rank-$k$ BSD of $A$.
\item
	Conversely, if $A$ has a rank-$k$ BSD then so does $A'$.
	\end{enumerate}
\end{lemma}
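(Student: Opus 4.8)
The plan is to prove both statements by checking, entry by entry, the defining condition $B_x B_y^T \es A_{x,y}$ (respectively $B'_x {B'_y}^T \es A'_{x,y}$) of a BSD. In both parts the constructed matrix is visibly binary and, being assembled by copying or deleting whole rows, has the same number of columns (at most $k$) as the matrix it is built from; hence by the definition of a \rkbsd{} the only thing left to verify is the inner-product relation on every pair of indices. Throughout I would fix the two \emph{distinct} representatives $u,v\in D$ (distinct since $|D|>2^k\ge 2$) and set $\alpha:=A_{u,v}$. As $u\neq v$ this is an off-diagonal entry, hence not a wildcard, and by \cref{lem:block-same} $\alpha$ is the common value taken by every off-diagonal entry of $A_{D,D}$ and by every non-wildcard diagonal entry of $A_{D,D}$. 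I would also record the elementary twin fact that for twins $x,v$ and any $y\notin\{x,v\}$ one has $A_{x,y}=A_{v,y}$ \emph{exactly} (not merely $\es$), because wildcards occur only on the diagonal and these entries are off-diagonal.

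For the first statement I would split into cases according to how many of $x,y$ lie in $D$. If neither does, then $(x,y)\neq(v,v)$, so $A'_{x,y}=A_{x,y}$ and $B_x=B'_x$, $B_y=B'_y$, and the claim is immediate from $B'$ being a BSD of $A'$. If exactly one does, say $x\in D$ and $y\notin D$, then $B_x=B'_v$ and $A_{x,y}=A_{v,y}=A'_{v,y}$ by the twin fact, so $B_x B_y^T=B'_v {B'_y}^T\es A'_{v,y}=A_{x,y}$. If both $x,y\in D$, then $B_x=B_y=B'_v$, so $B_x B_y^T=B'_v {B'_v}^T\es A'_{v,v}=\alpha$; for $x\neq y$ this equals $A_{x,y}$ by \cref{lem:block-same}, while for $x=y$ either $A_{x,x}=\wild$ (nothing to prove) or $A_{x,x}=\alpha$, again by \cref{lem:block-same}. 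In every case the required relation holds, so $B$ is a \rkbsd{} of $A$.

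For the converse I would start from any \rkbsd{} of $A$ and first apply \cref{lem:allsame} (legitimate since $|D|>2^k$) to obtain a \rkbsd{} $B$ in which all rows indexed by $D$ coincide with a single vector $\rho$; in particular $B_u=B_v=\rho$. I then take $B'$ to be the restriction of $B$ to the rows in $V(G')=V(G)\setminus(D\setminus\{v\})$. For every pair $(x,y)\neq(v,v)$ we have $A'_{x,y}=A_{x,y}$ and $B'_x {B'_y}^T=B_x B_y^T\es A_{x,y}$, so only the new self-pairing at $v$ needs attention: there $B'_v {B'_v}^T=\rho\rho^T=B_u B_v^T\es A_{u,v}=\alpha=A'_{v,v}$, and since $\alpha$ is not a wildcard this is an equality. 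Hence $B'$ is a \rkbsd{} of $A'$.

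The only genuine subtlety, and the step I would be most careful about, is the wildcard bookkeeping on the diagonal — in particular the conceptual point underpinning the whole reduction: after collapsing the block to the single vertex $v$, the self-inner-product $B'_v {B'_v}^T$ must reproduce what was formerly an inner product between two \emph{distinct} block members, namely $B_u B_v^T=A_{u,v}=\alpha$. This is exactly why \cref{rul:block-size} sets $A'_{v,v}:=A_{u,v}$ rather than copying $A_{v,v}$, and why the reduced instance must be annotated even when the original diagonal entry at $v$ was a wildcard. Everything else reduces to matching entries through \cref{lem:block-same} and the twin fact, with \cref{lem:allsame} supplying the canonical all-equal solution that makes the converse direction go through.
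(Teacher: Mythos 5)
Your proposal is correct and follows essentially the same route as the paper's own proof: the same case analysis on how many of the two indices lie in $D$ for the forward direction, and the same use of \cref{lem:allsame} plus row restriction for the converse, with the key observation in both being that $A'_{v,v}:=A_{u,v}$ encodes a former inner product between distinct block members. If anything, your explicit sub-case split for $x=y$ with $A_{x,x}=\wild$ is slightly more careful than the paper's chain of equalities in its third case, which tacitly relies on $\es$ at that point.
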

\begin{proof}
\begin{enumerate}
	\item	
		It is clear that $B$ has only $k$ columns. So, it only remains to prove that $B$ is a BSD of $A$, for which it is sufficient to prove that $B_{v_1}B_{v_2}^T\es A_{v_1,v_2}$ for all $v_1,v_2\in V(G)$. 
		For $v_1,v_2 \in V(G)\setminus D$, we have 	
		\begin{align*}
			B_{v_1}B_{v_2}^T=B'_{v_1}B'^T_{v_2}\es A'_{v_1,v_2}=A_{v_1,v_2}.
		\end{align*}
		For $v_1\in V(G)\setminus D$ and $v_2\in  D$, we have 	
		\begin{align*}
			B_{v_1}B_{v_2}^T=B'_{v_1}B'^T_{v}=A'_{v_1,v}=A_{v_1,v}=A_{v_1v_2},
		\end{align*}
		where the last equality follows as $v$ and $v_2$ are twins.

		For $v_1,v_2\in  D$, we have 	
		\begin{align*}
			B_{v_1}B_{v_2}^T=B'_{v}B'^T_{v}=A'_{v,v}=A_{u,v}=A_{v_1,v_2},
		\end{align*}
		where the last equality follows from Lemma~\ref{lem:block-same}.
	\item
		By \cref{lem:allsame} we know that there exists a rank-$k$ BSD of $A$ such that $B_{v_1}=B_{v_2}$ for all $v_1,v_2\in D$. 
		In particular $B_u=B_v$.
		Let $B'$ be defined as 
		$B'_x:=B_x$ for all $x\in V(G')$.
		We show that $B'$ is a rank-$k$ BSD of $A'$.
		Since $B'$ has only $k$ columns, it only remains to prove that $B'$ is a BSD of $A'$, which we do as follows.
		For $(v_1,v_2) \in \left(V(G')\times V(G')\right)\setminus (v,v)$, we have
		\begin{align*}
			B'_{v_1}B'^T_{v_2}=B_{v_1}B^T_{v_2}\es A_{v_1,v_2}=A'_{v_1,v_2}.
		\end{align*}
		And,
		\begin{align*}
			B'_{v}B'^T_{v}=B_{v}B^T_{v}=B_{u}B^T_{v}=A_{u,v}=A'_{v,v}.
		\end{align*}
\end{enumerate}	
\end{proof}

After the above rules are exhaustively applied, each block has size at 
most~$2^k$ and the number of blocks is at most $2^k$. Thus we have the required 
kernel of size~$4^{k}$. The time required for computing the kernel can be 
shown to be~$\bigO(n^2\log n)$. This is achieved by using sorting to find 
blocks of twins.
Since the edge weights and vertex weights for vertices in~$W$ cannot exceed $k$ 
if there is a solution with at most $k$ cliques, a kernel with at most $4^k$ 
vertices can be encoded using $\bigO(\binom{4^k}{2}\log k)$ bits, and so 
\cref{thm:kernel} follows.

\section{Algorithm} 
\label{sec:algorithm}

Here we give an algorithm for the BSD-DW problem.
The algorithm also solves AWECP due to the equivalence from Lemma~\ref{lem:equiv}.
In particular, it solves WECP thereby proving Theorem~\ref{thm:algo}.

We now give a description of the algorithm.
A pseudocode is given in \cref{alg:bsd}.
Our input is a symmetric matrix $A\in \zs^{n\times n}$ where wildcards $\wild$
appear only on the diagonal. 
First we guess a matrix $P\in\binkk$ such that for some $r\le k$, $P_{[r],[k]}$ is a row basis of solution $B$. 
We show that for this, it is sufficient to enumerate $k\times k$ binary matrices that satisfy a specific property defined as follows.
Let $w$ be the largest integer entry of $A$.
We call a matrix $\mathbf{w}${\bf -limited} if the dot-product of each pair of 
its rows is at most~$w$. 
The following fact shows that we only need to enumerate $w$-limited matrices in $\binkk$ to guess $P$. 
\begin{fact}
	\label{fact:sol-wlimited}
	If $B$ is a BSD of matrix $A$ and $w$ is the largest integer entry of $A$, then 
any submatrix of $B$ (including~$B$) is $w$-limited.	
\end{fact}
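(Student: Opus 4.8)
The plan is to prove Fact~\ref{fact:sol-wlimited} directly from the definitions. The statement says that if $B$ is a BSD of $A$ (so $BB^T \es A$ and $B$ is binary), and $w$ is the largest integer entry of $A$, then any submatrix of $B$ is $w$-limited, meaning the dot-product of each pair of its rows is at most $w$. First I would observe that it suffices to prove the claim for $B$ itself, because restricting to a submatrix (by deleting some rows and/or columns) can only decrease dot-products: deleting columns removes nonnegative terms from a dot-product of two binary rows, and deleting rows simply removes some pairs from consideration. So the whole content is to show that each pairwise dot-product of rows of $B$ is at most $w$.

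The key step is to relate the dot-product $B_i B_j^T$ to the entry $A_{i,j}$. By the definition of BSD we have $BB^T \es A$, which entrywise means $(BB^T)_{i,j} \es A_{i,j}$, i.e.\ $B_i B_j^T \es A_{i,j}$ for all $i,j$. I would split into two cases according to whether $A_{i,j}$ is a wildcard. If $A_{i,j} \neq \wild$, then $a \es b$ forces equality when neither side is a wildcard; since $B_i B_j^T$ is an ordinary integer (the dot-product of binary vectors), we get $B_i B_j^T = A_{i,j} \le w$, because $w$ is the largest integer entry of $A$. If $A_{i,j} = \wild$, then by hypothesis wildcards appear only on the diagonal, so $i = j$; but then I need a separate argument, since a diagonal dot-product $B_i B_i^T$ equals the number of ones in row $B_i$ and is not directly bounded by an entry of $A$.

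The main obstacle is therefore the diagonal case $A_{i,i} = \wild$, where the wildcard gives no constraint. However, for the $w$-limited property I only care about dot-products of \emph{distinct} rows in the relevant application, but the definition as stated quantifies over each pair, which I read as ordered pairs possibly including $i=j$. To handle $i=j$ cleanly, I note that the $w$-limited property is about pairs of rows and the troublesome diagonal self-dot-product $B_i B_i^T$ would not in general be bounded by $w$. I would resolve this by interpreting ``each pair of its rows'' as each pair of distinct rows, which is the quantity actually used when guessing the basis and extending it; under this reading every relevant dot-product $B_i B_j^T$ with $i \neq j$ satisfies $A_{i,j} \neq \wild$ (wildcards are confined to the diagonal), so the equality $B_i B_j^T = A_{i,j} \le w$ goes through in all cases.

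Putting it together, for any two distinct rows of any submatrix of $B$, the dot-product is at most the dot-product of the corresponding full rows $B_i, B_j$ of $B$ with $i \neq j$, which equals $A_{i,j} \le w$ since $A_{i,j}$ is a genuine integer entry. Hence every submatrix of $B$ is $w$-limited, completing the proof. I expect the only subtlety to be the bookkeeping around the wildcard/diagonal convention; the arithmetic relating $B_i B_j^T$ to $A_{i,j}$ via $\es$ and the monotonicity under taking submatrices are both routine.
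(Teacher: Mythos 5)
Your proof is correct and matches the intended reasoning: the paper states this as a Fact without proof, and your argument is exactly the routine verification — off-diagonal entries of $A$ are genuine integers (wildcards are confined to the diagonal in the BSD-DW setting), so $B_iB_j^T \es A_{i,j}$ forces $B_iB_j^T = A_{i,j} \le w$ for $i \neq j$, and passing to a submatrix only deletes nonnegative terms or pairs. Your reading of ``each pair of its rows'' as pairs of \emph{distinct} rows is indeed the intended one (and necessary, since the self dot-product $B_iB_i^T$ is unconstrained when $A_{i,i}=\wild$), and it is consistent with how $w$-limitedness is used in \cref{lem:basisenum}, where it serves only to forbid an all-ones $2\times(w+1)$ submatrix spanning two distinct rows.
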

Note that guessing $P$ is done in Loop 1 of Algorithm~\ref{alg}.
We will later give a bound on the number of $w$-limited matrices in $\binkk$ during the runtime analysis in Section~\ref{sub:runtime_analysis}, thereby bounding the number of iterations of Loop 1.

We maintain partially filled matrices during the algorithm, i.e., we allow 
matrices to have \emph{null rows} (this is different from wildcards). Think of 
the null rows as the rows that have not been filled yet. If each row of a matrix 
is either a binary row or a null row, we call it a \emph{binary matrix with 
possibly null rows}. We denote by~$\B^{n\times k}$, the set of all $n\times k$ 
binary matrices with possibly null rows. 

We maintain a matrix $\basis\in \Bnk$ as a potential basis for our solution $B$.
In Line~\ref{line:bext-call}, we call $\ebas$ that checks whether the current 
$\basis$ can be extended to a full solution $B$. Note that $\ebas$ does not try 
all possibilities to fill the remaining rows. It fills a row with the first 
binary vector that is compatible with the rows so far, where compatibility is 
defined as follows. For a matrix $B\in \Bnk$, we say that a vector $v\in \bin^k$ 
is \mbox{$\mathbf{i}${\bf -compatible}} for $B$ if $v^Tv \es A_{i,i}$, and for 
all $j\neq i$ such that $B_{j}$ is not a null row, $v^TB_{j}^T=A_{i,j}$. If 
$\ebas$ is able to fill all the rows with $i$-compatible binary vectors, then we 
are done and we return the resulting matrix (in Line~\ref{line:yes-output}). If 
not, we claim that the row for which we are not able to fill can be added 
to the basis (in Claim~\ref{claim:lin-ind-sub}). So we add one more row to the 
basis by copying the next row from $P$ (in Line~\ref{line:basis-inc}). Thus we 
increase the number of non-null rows in the basis $\basis$ by one and repeat. 
Since the basis can be at most of size $k$, we need to repeat this at most $k$ 
times.

\begin{algorithm}[tb]
	\caption{Algorithm for BSD-DW\label{alg:bsd}}	
	\label{alg}
\DontPrintSemicolon
	\SetKwData{Bsz}{basis\_index}
	\SetKwData{Basis}{Basis}
	\SetKwData{Comp}{compatibility}
	\SetKwData{True}{true}
	\SetKwData{False}{false}
	\SetKwFunction{Ebas}{ExtendBasis}
	\setcounter{AlgoLine}{0}
	\SetKwInOut{Input}{Input}\SetKwInOut{Output}{Output}
	\SetKwInOut{Assumption}{Assumption}
	
	\Input{An $n\times n$ symmetric integer diagonal-wildcard matrix $A$}
	\Output{If $A$ has a rank-$k$ BSD then output a rank-$k$ BSD $B$ of $A$; \\
			otherwise report that $A$ has no rank-$k$ BSD}
	\BlankLine
	
	$w\leftarrow$ largest integer weight in $A$\;
	\ForEach(\tcp*[f]{Loop~1}) {w-limited $P\in \bin^{k\times k}$ } {
		Initialize $\basis$ to be an $n\times k$ matrix with all null rows 
\label{line:init}\;
		$b\leftarrow 1$\;
		$i\leftarrow 1$\;
		\While(\tcp*[f]{Loop~2}){$b\le k$ and $P_{b}$ is 
$i$-compatible with $\basis $}{
			$\basis_{i}\leftarrow P_{b}$\label{line:basis-inc}\;
			$(B,i)\leftarrow \Ebas(A,\basis)$\label{line:bext-call}\;
			\lIf {$i=n+1$}{ {\bf output} $B$ and {\bf terminate} the 
algorithm\label{line:yes-output}} 
			$b\leftarrow b+1$ \label{line:basis-size-inc}\;
		}	
	}
	{\bf output} that $A$ has no rank-$k$ BSD and {\bf terminate} the 
algorithm \label{line:no-output} \;
	\SetKwProg{Fn}{Function}{:}{end of function}
	\let\oldnl\nl%
	\newcommand{\nonl}{\renewcommand{\nl}{\let\nl\oldnl}}
	\nonl \;
	\nonl \Fn{\Ebas{A,B}}{
		\For(\tcp*[f]{Loop~3}){each null row $i$ in $B$ in increasing order  
\label{line:while-bext}}{
			\If {there is a $v\in\bin^k$ such that $v$ is $i$-compatible with 
$B$\label{line:condition-icomp}}{
					$B_{i}\leftarrow v$ \label{line:fillrow2}	
			}
			\lElse{\KwRet{$(B,i)$}}\label{line:out-i}
		}
		\KwRet{ $(B,n+1)$}\label{line:out-bout}
	}
\end{algorithm}

\subsection{Correctness of the algorithm}
\label{sub:correctness_of_the_algorithm}

The algorithm %
outputs either through Line~\ref{line:yes-output} or through 
Line~\ref{line:no-output}. In the former case, we prove the following claim.
\begin{claim}%
	\label{claim:out-bout}
	If output occurs through Line~\ref{line:yes-output}, then the matrix $B$ 
that is output, is a rank-$k$ BSD of $A$.
\end{claim}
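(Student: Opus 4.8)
The plan is to verify the three conditions that make the output matrix $B$ a \rkbsd{} of $A$: that $B$ is binary, that it has at most $k$ columns, and that $BB^T\es A$. The first two are immediate. By construction $B$ is an $n\times k$ matrix (initialized in Line~\ref{line:init} and never given extra columns), so it has exactly $k$ columns; and output through Line~\ref{line:yes-output} occurs only when the preceding call to $\ebas$ returns the index $n+1$, which by Line~\ref{line:out-bout} means $\ebas$ exited Loop~3 with no remaining null rows, having filled each one in Line~\ref{line:fillrow2} with a vector of $\bin^k$. Together with the basis rows copied from the binary matrix $P$, every row of $B$ is binary. Hence the substance of the claim is the decomposition condition $B_iB_j^T\es A_{i,j}$ for every pair $i,j$.

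My approach is an ordering argument on the rows according to the moment each receives its final value. First I would observe that no row is ever overwritten: a basis row, once assigned in Line~\ref{line:basis-inc}, is no longer null and is therefore skipped by Loop~3 of $\ebas$ (which touches only null rows), and $\ebas$ makes a single increasing pass, so a row it fills is never revisited. I would also note that the index returned by $\ebas$ and reused as the next basis position in Line~\ref{line:basis-inc} is always a null row of $\basis$, so distinct basis rows land at distinct positions and no collision occurs. This justifies speaking unambiguously of ``the value of row $\ell$ at the time it was filled'', which coincides with its value in the final $B$.

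The core step is then, for a fixed pair $i\neq j$, to let $\ell\in\{i,j\}$ be whichever of the two rows was filled later, with the other one written $\ell'$, and to inspect the compatibility test that certified the value $v=B_\ell$. When row $\ell$ was filled, row $\ell'$ was already non-null (filled earlier and never changed), so the off-diagonal clause of $\ell$-compatibility, namely $v^TB_{\ell'}^T=A_{\ell,\ell'}$, was checked against it; this gives $B_iB_j^T=A_{i,j}$, where I use that off-diagonal entries of $A$ carry no wildcard so that $\es$ is genuine equality. For the diagonal I would instead invoke the clause $v^Tv\es A_{\ell,\ell}$, giving $B_\ell B_\ell^T\es A_{\ell,\ell}$ directly. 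Compatibility is enforced at every assignment: for basis rows by the while-guard of Loop~2 that is evaluated before Line~\ref{line:basis-inc}, and for the rows filled inside $\ebas$ by the test in Line~\ref{line:condition-icomp}.

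The place that needs care, and which I expect to be the main obstacle, is the bookkeeping of the fill order when $i$ and $j$ are of different kinds (one a basis row deposited in the main loop, the other filled inside $\ebas$), since basis rows are placed at non-contiguous positions interleaved across several $\ebas$ calls. The clean way around this is to focus on the final, successful call to $\ebas$: at its start all basis rows of $\basis$ are present and non-null, so every row filled by this call is tested against every basis row, while every pair of basis rows was mutually checked at the later of their two main-loop placements. Thus the three cases (both basis, mixed, both filled by $\ebas$) all collapse to the single fact that ``the later-filled row was tested against the earlier one,'' which together with the diagonal check completes the verification that $BB^T\es A$.
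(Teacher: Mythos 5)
Your proof is correct and follows essentially the same route as the paper's: output through Line~\ref{line:yes-output} forces the preceding $\ebas$ call to have returned $n+1$ via Line~\ref{line:out-bout}, so $B$ has no null rows and is binary with $k$ columns, and every row was filled (either in Line~\ref{line:basis-inc} or in Line~\ref{line:fillrow2}) with a vector that was $i$-compatible at the time of filling, which yields $BB^T\es A$. The only difference is that you spell out the later-filled-versus-earlier-filled pairing argument (and the fact that rows are never overwritten) which the paper leaves implicit in the phrase ``$i$-compatible at the time of filling''; this is a welcome clarification rather than a different proof.
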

\begin{proof}%
If Line~\ref{line:yes-output} is executed, then this means that the preceding 
\Ebas call on Line~\ref{line:bext-call} returned $i=n+1$. This implies that the 
return from \Ebas happened on Line~\ref{line:out-bout}. This in turn means that 
the condition of the while loop in 
Line~\ref{line:while-bext} was no longer true. This means the matrix $B$ did not 
have any null rows at the time of return. Thus $B\in \bin^{n\times k}$. The rows 
of $B$ were each filled either in Line~\ref{line:basis-inc} (when it was 
$\basis$ before being passed to \Ebas) or in Line~\ref{line:fillrow2}. In both 
places, we filled each row $i$ with a vector that was $i$-compatible at the time 
of filling. From the definition of $i$-compatibility, it follows that $BB^T\es 
A$, and hence $B$ is a rank-$k$ BSD of $A$.
\end{proof}

Consider a NO instance first. From \cref{claim:out-bout} it follows that the 
output does not occur through Line~\ref{line:yes-output}.
Thus the output has to occur through Line~\ref{line:no-output} and hence we 
correctly output that $A$ does not have a rank-$k$ BSD.
So it only remains to prove the correctness when $A$ is a YES instance, 
i.e., when $A$ has a rank-$k$ BSD, which is the case we consider for the 
remainder of the proof. Let $B^*$ be any fixed rank-$k$ BSD of $A$.

Observe that $\basis$ changes as follows during each iteration of Loop~1: it is 
initialized to all null rows and each time the algorithm encounters 
Line~\ref{line:basis-inc} a null row is replaced with a binary row vector. We 
say that a matrix $B$ is {\bf consistent} with $B^*$ if $B_{j}=B^*_{j}$ for each 
$j$ such that $B_{j}$ is a non-null row.

\begin{claim}
	\label{claim:lin-ind-sub}
Consider a matrix $\basis\in \Bnk$ that is consistent with $B^*$.
If $\Ebas(A,\basis)$ returns $i\in [n]$ then $B^*_{i}$ is linearly independent 
from the non-null rows of $\basis$. 
\end{claim}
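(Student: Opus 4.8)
We have a partially-filled matrix $\basis \in \Bnk$ consistent with $B^*$, meaning its non-null rows agree with the corresponding rows of $B^*$. The function $\ebas(A,\basis)$ fills null rows greedily with $i$-compatible vectors, and returns some row index $i \in [n]$ precisely when it gets stuck — when no $v \in \bin^k$ is $i$-compatible with the matrix built so far. The claim asserts that at such a stuck row $i$, the true row $B^*_i$ is linearly independent from the non-null rows of $\basis$ (i.e. the rows $\basis$ had when it was passed in, not the newly filled ones). Intuitively: if $B^*_i$ were a linear combination of the current basis rows, then because each basis row equals the corresponding row of $B^*$, the value $v = B^*_i$ ought to have been a valid $i$-compatible choice, contradicting that we got stuck.

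**Plan.** I would argue by contraposition: assume $B^*_i$ is a linear combination of the non-null rows of $\basis$, and derive that $v := B^*_i$ is $i$-compatible with the matrix $B$ present at the moment $\ebas$ examines row $i$, contradicting the hypothesis that $\ebas$ returned $i$. Let $R$ denote the set of indices of non-null rows of the input $\basis$; by consistency, $\basis_j = B^*_j$ for all $j \in R$. Suppose $B^*_i = \sum_{j \in R} \lambda_j B^*_j$ for scalars $\lambda_j$. I need to check the two conditions in the definition of $i$-compatibility for $v = B^*_i$ against the current matrix $B$ (which contains the original rows of $\basis$ plus possibly some rows filled earlier in Loop~3).

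**Key steps.** First, the diagonal condition $v^T v \es A_{i,i}$: since $B^*$ is a BSD of $A$, we have $B^*_i (B^*_i)^T \es A_{i,i}$, and $v = B^*_i$, so this holds immediately. Second, the off-diagonal condition: for every non-null row $j \ne i$ of the current $B$, I must show $v^T B_j^T = A_{i,j}$. The subtlety is that $B_j$ may be a row filled earlier by $\ebas$, not one of the original basis rows in $R$. To handle this, I would exploit the linear dependence together with the fact that $B_j = B^*_j$ for every non-null row $j$ of $B$ — this holds for the original rows by consistency, and I would want it to hold for the freshly filled rows as well. Here is where I expect to lean on the computation $v^T B_j^T = B^*_i B_j^T = \bigl(\sum_{j' \in R} \lambda_{j'} B^*_{j'}\bigr) B_j^T = \sum_{j' \in R} \lambda_{j'} \bigl(B^*_{j'} B_j^T\bigr)$, and since the rows in $R$ are genuine rows of $B^*$ while $B_j$ ranges over non-null rows equal to the corresponding rows of $B^*$, each inner product $B^*_{j'} B_j^T = B^*_{j'}(B^*_j)^T = A_{j',j}$ provided the relevant diagonal/off-diagonal entries are not wildcards. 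Summing gives $\sum_{j'} \lambda_{j'} A_{j',j} = \sum_{j'} \lambda_{j'} B^*_{j'}(B^*_j)^T = B^*_i (B^*_j)^T = A_{i,j}$, as required.

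**Main obstacle.** The hard part will be justifying that the inner-product identities hold in the presence of wildcards. Wildcards live only on the diagonal of $A$, and the $\es$ relation is not transitive in general (cf. Fact~\ref{fact:es}), so I must be careful: the relation $B^*_i (B^*_j)^T \es A_{i,j}$ is a genuine equality whenever $i \ne j$ (off-diagonal, no wildcard), which covers the off-diagonal condition cleanly, but the chain of substitutions through the linear combination requires each pairwise product among basis rows to equal the corresponding (wildcard-free, off-diagonal) entry of $A$. I would therefore structure the proof to invoke only off-diagonal identities in the summation — where $\es$ collapses to honest equality — and treat the single diagonal condition $v^T v \es A_{i,i}$ separately and directly via $v = B^*_i$. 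The other delicate point is confirming that every non-null row of $B$ that $\ebas$ sees when processing index $i$ indeed equals the corresponding row of $B^*$: the original basis rows do by consistency, and I would need to remark that rows filled earlier in the same $\ebas$ call need not equal $B^*$, so the cleanest route is to express $v^T B_j^T$ using the linear combination over $R$ only (which are all genuine $B^*$ rows) rather than over all of $B$'s non-null rows, thereby sidestepping any appeal to the correctness of previously filled rows.
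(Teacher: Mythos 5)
Your overall strategy---arguing by contradiction that $v:=B^*_i$ would have been $i$-compatible with the matrix $B$ held by $\ebas$ at the moment it examined row $i$---is the same as the paper's, and your handling of wildcards is correct (all the identities you need are off-diagonal, hence genuine equalities; the diagonal condition follows directly from $B^*$ being a BSD). However, there is a genuine gap at exactly the point you flag as the delicate one, and your proposed fix does not close it. The off-diagonal condition of $i$-compatibility must be verified against \emph{every} non-null row $B_j$ of the current matrix $B$, including rows filled earlier in the same $\ebas$ call. For such a row, your key-steps computation evaluates $B^*_{j'}B_j^T$ as $B^*_{j'}(B^*_j)^T$, i.e., it substitutes $B_j=B^*_j$, which, as you yourself then observe, is unjustified: greedily filled rows need not agree with $B^*$. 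Your repair---``express $v^TB_j^T$ using the linear combination over $R$ only, thereby sidestepping any appeal to the correctness of previously filled rows''---does not resolve this: after expanding $B^*_i=\sum_{j'\in R}\lambda_{j'}B^*_{j'}$, the filled row $B_j$ still appears in every term $B^*_{j'}B_j^T$, so some property of $B_j$ must be invoked; and if instead you only check compatibility against the rows of $R$ (i.e., against $\basis$), no contradiction follows, because $\ebas$ returning $i$ only asserts that no vector is $i$-compatible with the \emph{current} $B$, not with $\basis$.

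The missing ingredient is the one fact about filled rows that the algorithm does guarantee: when $B_j$ was filled in Line~\ref{line:fillrow2}, it was $j$-compatible with the matrix at that time, whose non-null rows include all of $R$. Hence $\basis_{j'}B_j^T=A_{j',j}$ for every $j'\in R$, and by consistency $\basis_{j'}=B^*_{j'}$, so $B^*_{j'}B_j^T=A_{j',j}$. Plugging this into your expansion gives
\begin{align*}
B^*_iB_j^T \;=\; \sum_{j'\in R}\lambda_{j'}B^*_{j'}B_j^T \;=\; \sum_{j'\in R}\lambda_{j'}A_{j',j} \;=\; \sum_{j'\in R}\lambda_{j'}B^*_{j'}(B^*_j)^T \;=\; B^*_i(B^*_j)^T \;=\; A_{i,j},
\end{align*}
where the third equality uses that $B^*$ is a BSD and $j'\neq j$, and the last uses $i\neq j$ (both entries are off-diagonal, hence wildcard-free). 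This is precisely Case~2 of the paper's proof of \cref{claim:lin-ind-sub}; without this step the $i$-compatibility of $B^*_i$, and hence the contradiction, is not established.
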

\begin{proof}
For a matrix $M\in \Bnk$, we denote by~$\nnr(M)$ the set of indices of the 
non-null rows of $M$. Suppose for the sake of contradiction that 
$\Ebas(A,\basis)$ returns $i\in [n]$ and $B^*_i$ is linearly dependent on the 
non-null rows of $\basis$. Then, we have $B^*_i=	\Sigma_{\ell\in 
\nnr(\basis)}\lambda_{\ell}\basis_{\ell}$ for some 
$\lambda_1,\lambda_2,\cdots,\lambda_{\ell}\in \RR$. Since $\basis$ is consistent 
with $B^*$, we can write $B^*_i=	\Sigma_{\ell\in 
\nnr(\basis)}\lambda_{\ell}B^*_{\ell}$.

As \Ebas returned $i$, we know that during that iteration of Loop~3 in which row 
$i$ was considered, no vector $v\in \bin^k$ was $i$-compatible with~$B$ (here $B$ 
is the matrix maintained by \Ebas that was initialized to $\basis$ by the 
function call). In particular, $B^*_i\in \bin^k$ was not $i$-compatible with 
$\basis$. Therfore, either there was some $j\in \nnr(B)$ such that 
$B^*_iB_{j}^T\neq A_{i,j}$, or $B^*_i(B^*_i)^T\not\es A_{i,i}$. The latter 
cannot be true as $B^*$ is a rank-$k$ BSD of~$A$. So there was a $j\in \nnr(B)$ 
such that $B^*_iB_{j}^T \neq A_{i,j}$.

We branch into two cases: case 1 when $j\in \nnr(\basis)$ and case 2 when $j\in 
\nnr(B)\setminus \nnr(\basis)$. In case 1, we have $B_{j}=\basis_{j}=B^*_{j}$ 
where the second equality is because $\basis$ and $B^*$ are consistent. Thus 
$B^*_iB_{j}^T = B^*_i(B^*_{j})^T = A_{i,j}$, giving a contradiction.

In case 2, $B_j$ was added in Line~\ref{line:fillrow2} and hence $B_j$ was 
$j$-compatible with $B$ at this time, implying that $B_{\ell}B_j^T=A_{\ell, j}$ 
for all $\ell \in R(\basis)$. Since $B_{\ell}=\basis_{\ell}=B^*_{\ell}$ for 
$\ell \in R(\basis)$, we have that $B^*_{\ell}B_j^T=A_{\ell, j}$ for all $\ell 
\in R(\basis)$. Then, we have
\begin{align*}
	B^*_iB_{j}^T & = \Sigma_{\ell\in 
\nnr(\basis)}\lambda_{\ell}B^*_{\ell}B_j^T\\
				 & = \Sigma_{\ell\in \nnr(\basis)}\lambda_{\ell}A_{\ell, j}\\
				 & = \Sigma_{\ell\in 
\nnr(\basis)}\lambda_{\ell}B^*_{\ell}(B^*_j)^T\\
				 & = B^*_{i}(B^*_j)^T\\
				 & = A_{i,j}
\end{align*}
This is a contradiction.
\end{proof}

For a matrix $X\in \binkk$, we say we are in iteration $(X,t)$ of 
the algorithm if we are in the iteration of Loop~1 with $P=X$ and the iteration 
of Loop~2 with $b=t$. We use $\basis(X,t)$ to denote the value of $\basis$ 
after the execution of Line~\ref{line:basis-inc} during iteration $(X,t)$. 

\begin{claim}%
	\label{claim:lin-ind}	
	At any step of the algorithm, if $\basis$ is consistent with $B^*$ 
then the non-null rows of $\basis$ are linearly independent.
\end{claim}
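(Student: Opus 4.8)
The plan is to prove the statement by induction on the number $t$ of non-null rows of $\basis$, exploiting the fact that $\basis$ is only ever modified at Line~\ref{line:basis-inc}, where a single null row is overwritten by a row of $P$ (the call $\ebas(A,\basis)$ stores its result in a separate matrix and leaves $\basis$ untouched). Hence it suffices to examine the successive values $\basis(X,1),\basis(X,2),\dots$ taken right after each such insertion, together with the initial all-null matrix. The base case $t=0$ is immediate, since the empty collection of rows is linearly independent.

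For the inductive step I would first make explicit the bookkeeping link between the two claims: for every $b\ge 2$, the index $i$ at which Line~\ref{line:basis-inc} inserts $P_b$ during iteration $(X,b)$ is exactly the index returned by the preceding call $\ebas(A,\basis(X,b-1))$ on Line~\ref{line:bext-call} of iteration $(X,b-1)$, and this index lies in $[n]$ (otherwise the algorithm would have terminated at Line~\ref{line:yes-output} and iteration $(X,b)$ would not exist). I would then observe that $\basis(X,b)$ and $\basis(X,b-1)$ differ only in this one extra non-null row, so if $\basis(X,b)$ is consistent with $B^*$ then so is $\basis(X,b-1)$, and moreover consistency forces the inserted row to equal $B^*_{i}$.

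The induction then closes by combining both available facts. By the induction hypothesis applied to the consistent matrix $\basis(X,b-1)$, its non-null rows are linearly independent; and since $\basis(X,b-1)$ is consistent with $B^*$ and $\ebas(A,\basis(X,b-1))$ returned $i\in[n]$, \cref{claim:lin-ind-sub} yields that $B^*_{i}$ is linearly independent from those rows. As the newly inserted row of $\basis(X,b)$ equals $B^*_{i}$, its full set of non-null rows is linearly independent. The single remaining case is the first insertion $b=1$, where $i=1$ and the lone non-null row is $B^*_{1}$; here I would invoke that the kernelized instance contains no isolated vertices, so $B^*_{1}\neq\zerov$ and the singleton row is independent.

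I expect the main obstacle to be conceptual rather than computational: pinning down that the index at which each new basis row is inserted coincides with the index handed back by the previous $\ebas$ call, which is precisely what makes \cref{claim:lin-ind-sub} applicable to the smaller basis, while simultaneously arguing that consistency of the enlarged basis descends to the smaller one so that both the induction hypothesis and \cref{claim:lin-ind-sub} can be used at once. The base case is the only point where an external fact (absence of isolated vertices, equivalently no all-zero row of $B^*$) is required.
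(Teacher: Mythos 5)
Your proof is correct and takes essentially the same approach as the paper's: the paper phrases it as a minimal-counterexample argument (consider the first violation, necessarily at an insertion on Line~\ref{line:basis-inc}), observes that consistency with $B^*$ descends to the previous basis matrix $\basis(X,t-1)$, and applies \cref{claim:lin-ind-sub} to the index returned by the preceding $\ebas$ call, which is exactly your inductive step. The only difference is that you explicitly handle the first insertion ($b=1$) via the absence of isolated vertices (so $B^*_1\neq\zerov$), a degenerate case the paper's first-violation phrasing leaves implicit.
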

\begin{proof}%
Consider the first time this is violated during the algorithm. This has to be 
during the addition of a new non-null row at Line~\ref{line:basis-inc}. Let 
$(X,t)$ be the iteration in which this happens. Let $p$	be the index of the row 
that was added. Observe that $\basis(X,t)$ has only one additional non-null row 
compared to $\basis(X,t-1)$. Also, this additional non-null row is equal to 
$B^*_p$ as $\basis(X,t)$ is consistent with~$B^*$. We know the rows of 
$\basis(X,t-1)$ are linearly independent as we assumed that the first violation 
of lemma happens in iteration $(X,t)$. Also, during iteration $(X,t-1)$, $i$ was 
returned with value $p$ (as the insertion happens in Line~\ref{line:basis-inc} 
in iteration $(X,t)$). This implies that $B^*_p$ is linearly independent from 
the non-null rows of $\basis(X,t-1)$ due to \cref{claim:lin-ind-sub}. Hence the 
rows of $\basis(X,t)$ are linearly independent.
\end{proof}

\begin{claim}
\label{claim:casek}
If the iteration $(X,k)$ occurs during the algorithm for some $X\in \binkk$ such 
that $\basis(X,k)$ is consistent with $B^*$ then the algorithm outputs through 
Line~\ref{line:yes-output} in iteration~$(X,k)$.
\end{claim}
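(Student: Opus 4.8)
The plan is to derive a contradiction from the assumption that the $\ebas$ call made in iteration $(X,k)$ fails, i.e.\ returns some $i\in[n]$ rather than $n+1$. The whole argument is a dimension count resting on the two preceding claims: the point is simply that $k$ linearly independent vectors in $\RR^k$ already span $\RR^k$, so nothing outside their span can exist.

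First I would pin down the structure of $\basis(X,k)$. Line~\ref{line:basis-inc} is executed exactly once for each of $b=1,2,\dots,k$, and on each execution it overwrites a null row with a binary row; moreover the row index chosen at step $b>1$ is the index returned by the previous $\ebas$ call, which is necessarily a null row (since $\ebas$ iterates only over null rows in Loop~3). Hence the $k$ executions fill $k$ distinct rows, and $\basis(X,k)$ has exactly $k$ non-null rows. Because $\basis(X,k)$ is consistent with $B^*$ by hypothesis, \cref{claim:lin-ind} guarantees these $k$ rows are linearly independent; being $k$ independent vectors in $\RR^k$, they span all of $\RR^k$.

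Next I would analyze the $\ebas(A,\basis(X,k))$ call on Line~\ref{line:bext-call} that follows in the same iteration. If it returns $i=n+1$, then by the algorithm's logic the output occurs through Line~\ref{line:yes-output} and we are done. Otherwise it returns some $i\in[n]$; since $\basis(X,k)$ is consistent with $B^*$, \cref{claim:lin-ind-sub} then says $B^*_i$ is linearly independent from the non-null rows of $\basis(X,k)$. But those rows span $\RR^k$, so $B^*_i$ must be a linear combination of them --- a contradiction. Thus the failing case is impossible, the $\ebas$ call returns $n+1$, and the algorithm outputs through Line~\ref{line:yes-output} in iteration $(X,k)$, as claimed.

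The argument is short precisely because \cref{claim:lin-ind} and \cref{claim:lin-ind-sub} do the heavy lifting; the only genuinely new ingredient is the spanning observation, so I do not anticipate a real obstacle. The one bookkeeping point to get right is confirming that exactly $k$ (and not fewer) distinct rows are non-null in $\basis(X,k)$, which is what lets the independence from \cref{claim:lin-ind} upgrade to full spanning of $\RR^k$.
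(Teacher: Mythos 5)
Your proof is correct and takes essentially the same route as the paper's: assume the $\ebas$ call in iteration $(X,k)$ returns some $i\in[n]$, invoke \cref{claim:lin-ind-sub} to get that $B^*_i$ is independent of the non-null rows of $\basis(X,k)$, and contradict this with \cref{claim:lin-ind} plus the fact that $k$ linearly independent vectors in $\RR^k$ span the whole space. The only difference is that you explicitly verify the bookkeeping point that $\basis(X,k)$ has exactly $k$ non-null rows (one filled per execution of Line~\ref{line:basis-inc}, each on a distinct previously-null row), which the paper leaves implicit in the phrase ``hence span the whole space''; making it explicit is a small but genuine improvement in rigor, not a different argument.
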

\begin{proof}
	Consider the $i$ returned by $\Ebas(A,\basis(X,k))$. It is sufficient to 
	prove that the condition $i=n+1$ in Line~\ref{line:yes-output} is satisfied.
	Suppose otherwise.
	Then $i\in [n]$ and by \cref{claim:lin-ind-sub}, $B^*_i$ is linearly independent from the 
non-null rows of $\basis(X,k)$. But by \cref{claim:lin-ind}, we have that the 
non-null rows of $\basis(X,k)$ are linearly independent and hence span the whole 
space, thus giving a contradiction.
\end{proof}

\begin{claim}
	\label{claim:consistent}
	Assume that the output of the algorithm does not occur through 
Line~\ref{line:yes-output}. If for some $Y\in \bin^{k\times k}$ and $t\le k-1$, 
iteration $(Y,t)$ occurs and $\basis (Y,t)$ is consistent with $B^*$, then there 
exists some $Z\in \bin^{k\times k}$ such that iteration $(Z,t+1)$ occurs and 
$\basis(Z,t+1)$ is consistent with $B^*$. 
\end{claim}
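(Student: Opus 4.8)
The plan is to build $Z$ by copying the first $t$ rows of $Y$ and choosing its $(t+1)$-th row to be exactly the row of $B^*$ on which $\ebas$ gets stuck in iteration $(Y,t)$. Concretely, since by assumption the algorithm never outputs through Line~\ref{line:yes-output}, the call $\ebas(A,\basis(Y,t))$ on Line~\ref{line:bext-call} must return some index $p\in[n]$ (had it returned $n+1$, the algorithm would have output). Note that $p$ is a null row of $\basis(Y,t)$, hence distinct from all its non-null rows. I then define $Z\in\binkk$ by $Z_s:=Y_s$ for $s\le t$, $Z_{t+1}:=B^*_p$, and $Z_s:=\zerov$ for $s>t+1$.

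First I would argue that iterations $1,\dots,t$ of Loop~2 run identically for $P=Z$ and for $P=Y$. Since $\ebas$ is deterministic and the assignments on Line~\ref{line:basis-inc} depend only on the first $t$ rows of $P$, which agree for $Y$ and $Z$, one obtains $\basis(Z,t)=\basis(Y,t)$ and that the pending row index just before iteration $(Z,t+1)$ equals $p$. By the global no-output assumption none of these iterations outputs, so for $P=Z$ the algorithm indeed reaches step $b=t+1$. The while-condition of Loop~2 then demands $t+1\le k$ (which holds as $t\le k-1$) and that $Z_{t+1}=B^*_p$ be $p$-compatible with $\basis(Z,t)=\basis(Y,t)$. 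This holds because $B^*$ is a rank-$k$ BSD of $A$: we have $B^*_p(B^*_p)^T\es A_{p,p}$, and for every non-null row $j\neq p$ of $\basis(Y,t)$ consistency gives $\basis_j=B^*_j$, so $B^*_p(B^*_j)^T=A_{p,j}$, an off-diagonal and hence wildcard-free entry. Therefore iteration $(Z,t+1)$ occurs, and $\basis(Z,t+1)$ arises from the consistent matrix $\basis(Y,t)$ by setting its row $p$ to $B^*_p$; as all its non-null rows now coincide with the corresponding rows of $B^*$, it is consistent with $B^*$.

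It remains to verify that $Z$ is actually enumerated in Loop~1, i.e.\ that $Z$ is $w$-limited. For distinct rows among the first $t$, the bound $\le w$ is inherited from $Y$, which is $w$-limited. For $Z_{t+1}=B^*_p$ paired with $Z_s=Y_s=B^*_{i_s}$ (where $i_s\neq p$ is the non-null index filled at step $s$, whence $A_{p,i_s}$ is off-diagonal), consistency yields $B^*_p(B^*_{i_s})^T=A_{p,i_s}\le w$ since $w$ is the largest integer entry of $A$. Every remaining pair involves a $\zerov$ row and contributes $0\le w$, so $Z$ is $w$-limited. I expect the main obstacle to be the bookkeeping of the first step, namely carefully justifying that altering only the $(t+1)$-th row of $P$ leaves iterations $1,\dots,t$ untouched and makes the step-$(t+1)$ while-condition fire through the tailored choice $Z_{t+1}=B^*_p$; the compatibility and $w$-limitedness checks are then routine consequences of $B^*(B^*)^T\es A$.
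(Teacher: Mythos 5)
Your proposal is correct and follows essentially the same route as the paper's proof: you construct $Z$ by copying $Y_{[t]}$, placing $B^*_p$ (for the returned stuck index $p$) in row $t+1$, and padding with zero rows, then argue that the first $t$ iterations of Loop~2 are unchanged, that the while-condition fires via $p$-compatibility of $B^*_p$, and that $Z$ is $w$-limited so it is enumerated in Loop~1. The only cosmetic difference is that the paper certifies $w$-limitedness in one step by observing that $Z_{[t+1]}$ is a submatrix of $B^*$ and invoking Fact~\ref{fact:sol-wlimited}, whereas you verify the pairwise dot-product bounds by cases; both are valid.
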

\begin{proof} 
Since $\basis (Y,t)$ is consistent with $B^*$, we know that $Y_{[t]}$ is a 
sub-matrix of $B^*$. As the condition in Line~\ref{line:yes-output} is false, 
we know that an $i\in [n]$ was returned in Line~\ref{line:bext-call} in 
iteration~$(Y,t)$. It is clear from the algorithm that $i$ is a null-row in 
$\basis (Y,t)$. Let $Z\in \binkk$ be such that $Z_{[t]}:=Y_{[t]}$, 
$Z_{t+1}:=B^*_i$, and $Z_{q}:=\zerov$ for all $q\geq t+1$. Observe that 
$Z_{[t+1]}$ is a submatrix of $B^*$ and hence is $w$-limited by 
\cref{fact:sol-wlimited}. Since adding zeroes does not destroy $w$-limitedness, 
we have that $Z$ is a $w$-limited $n\times k$ matrix. Thus there is some 
iteration of Loop~1 with $P=Z$. In this iteration the algorithm behaves 
similarly to the iteration with $P=Y$ for the first $t$ iterations of Loop~2 as 
the algorithm has seen only the first $t$ rows of $P$ up to then. Thus $\basis 
(Z,t)=\basis (Y,t)$ and $i$ is returned by Line~\ref{line:bext-call} in 
iteration $(Z,t)$. Now in Line~\ref{line:basis-inc} of iteration~$(Z,t+1)$, 
$\basis_{i}$ is assigned $Z_{t+1}$. Note that $Z_{t+1}=B^*_i$ is indeed 
$i$-compatible with $\basis(Z,t)$ (as $\basis(Z,t)=\basis(Y,t)$ and $\basis(Y,t)$ 
is consistent with $B^*$) and that $t+1\le k$. Hence the loop condition of 
Loop~2 is true in iteration $(Z,t+1)$. Thus, we have $(\basis (Z,t+1))_{i}= 
Z_{t+1}=B^*_i$ and for all $j\neq i$, we have $(\basis (Z,t+1))_{j}=(\basis 
(Y,t))_{j} $. Since $\basis (Y,t)$ is consistent with $B^*$, it follows that 
$\basis (Z,t+1)$ is consistent with $B^*$.
\end{proof}

Let $t$ be the largest number for which there exists a $P\in \binkk$ such that 
iteration~$(P,t)$ happens and $\basis(P,t)$ is consistent with $B^*$. Due to 
\cref{claim:consistent}, we know that $t=k$. Then the algorithm outputs through 
Line~\ref{line:yes-output} according to \cref{claim:casek}. Thus the algorithm 
outputs a correct solution $B$ due to \cref{claim:out-bout}.

\subsection{Runtime Analysis}
\label{sub:runtime_analysis}
First, let us bound the number of iterations of Loop 1. For this it is 
sufficient to bound the number of $w$-limited matrices in $\binkk$.

\begin{lemma}\label{lem:basisenum}
The number of binary $w$-limited $k\times k$ matrices is at most 
$(2e\sqrt{k/w})^{k^{3/2}w^{1/2}+k}$.
\end{lemma}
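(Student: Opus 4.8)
The plan is to count the $w$-limited matrices by bucketing them according to their number of ones. The two ingredients are then (i) an upper bound on how many ones any $w$-limited matrix in $\binkk$ can have, and (ii) a bound on the number of binary $k\times k$ matrices with at most that many ones. Ingredient (i) is exactly the bound advertised in \cref{sec:techniques}, and it comes from the Zarankiewicz problem.

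For (i), I would view a binary matrix $M\in\binkk$ as the bipartite adjacency between its $k$ rows and $k$ columns, placing an edge whenever the entry is $1$. The $w$-limited condition says precisely that any two distinct rows share at most $w$ common columns, i.e. the bipartite graph contains no $K_{2,w+1}$ with the ``$2$'' on the row side. To turn this into a bound on the number of ones I would double-count: writing $d_c$ for the number of ones in column $c$ and $E=\sum_c d_c$ for the total number of ones, the number of triples $(i,j,c)$ with $i\neq j$ and $M_{i,c}=M_{j,c}=1$ equals $\sum_c d_c(d_c-1)$ on one hand and $\sum_{i\neq j}M_iM_j^T\le k(k-1)w$ on the other. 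Combining with the Cauchy--Schwarz estimate $\sum_c d_c^2\ge E^2/k$ gives $E^2/k-E\le k(k-1)w\le k^2w$, and solving the resulting quadratic in $E$ (using $\sqrt{a+b}\le\sqrt a+\sqrt b$) yields $E\le k^{3/2}w^{1/2}+k=:m$.

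For (ii), every $w$-limited matrix is determined by choosing at most $m$ of the $k^2$ positions to carry a $1$, so their number is at most $\sum_{j=0}^{m}\binom{k^2}{j}$. Applying the standard estimate $\sum_{j=0}^{m}\binom{N}{j}\le(eN/m)^m$ with $N=k^2$, together with $m\ge k^{3/2}w^{1/2}$, gives
\[
\sum_{j=0}^{m}\binom{k^2}{j}\le\left(\frac{ek^2}{m}\right)^m\le\left(\frac{ek^2}{k^{3/2}w^{1/2}}\right)^m=\left(e\sqrt{k/w}\right)^m\le\left(2e\sqrt{k/w}\right)^{k^{3/2}w^{1/2}+k},
\]
which is the claimed bound (with the final factor-$2$ slack to spare).

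The main obstacle I anticipate is not conceptual but getting the regime of the binomial estimate right: the clean bound $\sum_{j\le m}\binom{N}{j}\le(eN/m)^m$ requires $m\le N=k^2$, which holds comfortably when $w$ is bounded away from $k$ but becomes borderline as $w\to k$, where $k^{3/2}w^{1/2}+k$ can slightly exceed $k^2$. This is precisely where the factor $2$ in $2e\sqrt{k/w}$ earns its keep: in that edge case one instead bounds the count by the trivial $2^{k^2}$ and verifies $2^{k^2}\le(2e\sqrt{k/w})^m$ directly, which is comfortable since there $m\approx k^2$ and $2e\sqrt{k/w}\ge 2e>2$. A minor point to state cleanly is the double-counting identity $\sum_{i\neq j}M_iM_j^T=\sum_c d_c(d_c-1)$, which makes the Zarankiewicz step transparent.
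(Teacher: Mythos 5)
Your proof is correct, and it follows the same two-step skeleton as the paper's: first bound the number of ones in any binary $w$-limited $k\times k$ matrix by $m:=k^{3/2}w^{1/2}+k$ via the Zarankiewicz problem, then count binary matrices with at most $m$ ones by a binomial estimate. The differences lie in how each ingredient is obtained, and they are worth noting. Where the paper simply cites~\cite{nikiforov2010contribution} for the bound on the number of ones, you re-derive it from scratch by the K\H{o}v\'ari--S\'os--Tur\'an double count (comparing $\sum_c d_c(d_c-1)$ with $\sum_{i\neq j}M_iM_j^T\le k(k-1)w$, then Cauchy--Schwarz and $\sqrt{a+b}\le\sqrt a+\sqrt b$), which makes the lemma self-contained at the cost of a short calculation; your derivation is correct. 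In the counting step, the paper bounds the number of matrices by $2^m\binom{k^2}{m}$ (choose $m$ potential positions, then decide which carry ones) and applies $\binom{n}{k}\le (ne/k)^k$, so its factor $2$ in the base $2e\sqrt{k/w}$ arises from the $2^m$; you instead bound by $\sum_{j\le m}\binom{k^2}{j}\le (ek^2/m)^m$ and keep the factor $2$ as pure slack. Finally, you explicitly treat the boundary regime $m>k^2$ (which occurs only when $w$ is within an additive constant of $k$), falling back on the trivial bound $2^{k^2}\le 2^m\le(2e\sqrt{k/w})^m$; the paper's one-line count glosses over this case --- indeed its expression $\binom{k^2}{m}$ vanishes there, so its intermediate bound degenerates even though the stated conclusion remains true. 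Both arguments implicitly use $w\le k$ (otherwise the base $2e\sqrt{k/w}$ eventually drops below $1$ and the statement itself fails), which is the only regime relevant to the algorithm, since $w\le k$ holds for every YES-instance.
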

\begin{proof}
Note that no $w$-limited matrix can have a $2\times (w+1)$-sub-matrix having 
all ones. The number of ones in such a matrix is a special case of the
well-studied Zarankiewicz problem and is known~\cite{nikiforov2010contribution} 
to be at most $k^{3/2}w^{1/2}+k$. 
Hence it follows that the number of binary $w$-limited $k\times k$ matrices is 
at most $2^{k^{3/2}w^{1/2}+k}\cdot\binom{k^2}{k^{3/2}w^{1/2}+k}$ by choosing the 
positions of the at most $k^{3/2}w^{1/2}+k$ potential ones in the matrix and 
then choosing which of them are actually ones. The bound follows easily by using 
that $ \binom{n}{k}\le \left( \frac{ne}{k} \right)^k $. 
\end{proof} 

Next, let us analyse the runtime of the function \Ebas. Loop~3 has at most~$n$ 
iterations. In Line~\ref{line:condition-icomp}, we need to check at most $2^k$ 
vectors $v\in \bin^k$. The checking for $i$-compatibility of each vector takes 
$\bigO(nk)$ time. Hence \Ebas takes $\bigO(k2^kn^2)$ time.

Now, we are ready to calculate the total run time. Due to \cref{lem:basisenum}, 
Loop~1 has at most $(2e\sqrt{k/w})^{k^{3/2}w^{1/2}+k}$  iterations.
Line~\ref{line:init} takes $\bigO(nk)$ time. Loop~2 has at most $k$ iterations. 
Line~\ref{line:basis-inc} takes at most $\bigO(k)$ time. The call to \Ebas in 
Line~\ref{line:bext-call} takes at most $\bigO(k2^kn^2)$ time as we already 
calculated. Any other step takes only constant time. Thus the total running time 
is bounded by
$
\bigO\left(\left((2e\sqrt{k/w})^{k^{3/2}w^{1/2}+k}\right)\left(nk+k(k+k2^kn^2) 
\right) \right) 
=\bigO\left((2e\sqrt{k/w})^{k^{3/2}w^{1/2}+k}\cdot k^22^kn^2\right).
 $
We may run our algorithm on the kernel provided by \cref{thm:kernel}, which 
means we may set $n=4^{k}$ in the above expression. The total running time is
 \[
	 \bigO\left((2e\sqrt{k/w})^{k^{3/2}w^{1/2}+k}\cdot k^2 2^{5k}+n^2\log 
n\right)=(2e)^{(k^{3/2}w^{1/2}+\bigO(k))\log_{2e}(k/w)}+\bigO(n^2\log n).
 \]

\section{Lower bound for number of ones in basis matrix}\label{sec:fpp}

In this section we 
construct binary matrices for which there is a rank-$k$ BSD and every basis of every rank-$k$ BSD  
has $\Omega(k^{3/2})$ ones, thereby proving \cref{thm:ones}.

We obtain such instances via Finite Projective Planes (FPPs), which are defined 
by a set system $\calS$ over a universe $U$ of elements such that
\begin{enumerate}
	\item for each $e,e'\in U$ there is exactly one $S\in \calS$ containing 
both of them,
	\item for each $S,S'\in \calS$ there is exactly one $e\in U$ such that 
$e\in S\cap S'$, and
	\item there is a set of $4$ elements in $U$ such that no three of them are 
in any $S\in \calS$. 
\end{enumerate}

It is known~\cite{matouvsekinvitation} that the definition implies that both the 
number of elements and the number of sets are equal to $N^2+N+1$ for some $N\ge 
2$. Here $N$ is called the \emph{order} of the~FPP. It also follows that for an 
FPP of order $N$, each set has exactly $N+1$ elements and each element is 
contained in exactly $N+1$ sets. FPPs exist for every prime power~$N$. 
\begin{fact}
	\label{fact:prime-power}
	For every prime power $N$, there is an FPP of that 
order~\cite{matouvsekinvitation}.
\end{fact}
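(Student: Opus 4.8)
The plan is to realize the desired FPP as the classical projective plane $PG(2,N)$ over the finite field with $N$ elements. Since $N$ is a prime power, the Galois field $\mathbb{F}_N$ with exactly $N$ elements exists, and this is the only place where the prime-power hypothesis is used. Working inside the three-dimensional vector space $\mathbb{F}_N^3$, I would take the universe $U$ to be the set of one-dimensional subspaces (the \emph{points}), represented by equivalence classes $[x:y:z]$ of nonzero triples under scaling by $\mathbb{F}_N^{*}$, and take $\calS$ to be the family of two-dimensional subspaces (the \emph{lines}), each described by a homogeneous equation $ax+by+cz=0$ with $(a,b,c)\neq 0$ taken up to scaling. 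A point $[x:y:z]$ belongs to the set associated with $(a,b,c)$ precisely when $ax+by+cz=0$.

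First I would verify the three incidence axioms by pure linear algebra. For axiom~1, two distinct points are two distinct one-dimensional subspaces; they span a unique two-dimensional subspace, so there is exactly one line containing both. For axiom~2, two distinct lines are two distinct two-dimensional subspaces $L,L'$ of $\mathbb{F}_N^3$; by the dimension formula $\dim(L\cap L')=\dim L+\dim L'-\dim(L+L')=2+2-3=1$, so they meet in exactly one point. For axiom~3, the standard frame $[1:0:0]$, $[0:1:0]$, $[0:0:1]$, $[1:1:1]$ works: checking each of the four triples against a hypothetical equation $ax+by+cz=0$ forces $(a,b,c)=0$ in every case, so no three of these four points lie on a common line.

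Finally I would confirm that the resulting structure has order exactly $N$, i.e.\ that its parameters match those asserted after the definition. A one-dimensional subspace consists of $N-1$ nonzero vectors, and $\mathbb{F}_N^3$ has $N^3-1$ nonzero vectors, so $|U|=(N^3-1)/(N-1)=N^2+N+1$; by the point--line duality exchanging the roles of $(x,y,z)$ and $(a,b,c)$, $|\calS|=N^2+N+1$ as well. A line is a two-dimensional subspace containing $N^2-1$ nonzero vectors, hence $(N^2-1)/(N-1)=N+1$ points, and dually each point lies on $N+1$ lines; this is exactly the order-$N$ parameter set, and $N\geq 2$ holds automatically since $N$ is a prime power.

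I expect no serious obstacle here: the entire argument is classical and reduces to elementary dimension counting in $\mathbb{F}_N^3$ together with the explicit quadrilateral for non-degeneracy. The only point demanding a little care is axiom~3, where one must exhibit a concrete set of four points in general position rather than argue abstractly; the standard frame makes this a one-line verification.
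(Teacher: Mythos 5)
Your proof is correct, and it matches the paper's treatment in the only sense available: the paper does not prove this fact at all but simply cites \cite{matouvsekinvitation}, where the standard argument is exactly your construction of $PG(2,N)$ with points and lines the one- and two-dimensional subspaces of $\mathbb{F}_N^3$. All three axioms and the parameter count check out, including the explicit frame $[1:0:0]$, $[0:1:0]$, $[0:0:1]$, $[1:1:1]$ witnessing the non-degeneracy axiom, so nothing is missing.
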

Given an FPP of order $N$, in the following we will denote the characteristic 
incidence matrix of elements and sets by $F\in\bin^{(N^2+N+1)\times(N^2+N+1)}$, 
where rows are elements and columns are sets.

We now give a reduction from the problem of finding an FPP of order $N$ to 
ECP. For this, consider a vertex set $V$ with $N^2+N+1$ vertices. Let $I$ be a 
subset of $N+1$ vertices in~$V$. Let $G_N$ be the graph defined as 
the clique over $V$ minus the clique over~$I$, i.e., every pair of vertices in $V$ is 
adjacent except when both are from $I$. In other words, if $X:=V\setminus I$, 
then $G_N$ is a split graph with $X$ as the clique and $I$ as the independent 
set, where all the adjacencies are present between $X$ and $I$.  The following 
lemmas show that $G_N$ has a small ECP solution if and only if an FPP of order 
$N$ exists.

\begin{lemma}\label{lem:FPP-ECP}
If a finite projective plane $\calS$ of order $N$ exists, then $G_N$ has a 
clique partition~$\calC$ into $|\calC|\leq N^2+N$ cliques. 
\end{lemma}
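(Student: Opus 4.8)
The plan is to exploit the incidence structure of the FPP directly: I would identify the vertex set $V$ of $G_N$ with the universe $U$ (both have $N^2+N+1$ elements), choosing the identification so that the independent set $I$ becomes exactly the $N+1$ elements of one fixed set $S_0 \in \calS$. Since $G_N$ is determined up to isomorphism by $|V|$ and $|I|$ alone, this relabelling is without loss of generality. The candidate clique partition is then simply $\calC := \calS \setminus \{S_0\}$, one clique per set of the plane other than $S_0$; there are $N^2+N+1-1 = N^2+N$ of these, matching the target bound.

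First I would verify that each $S \in \calC$ is a clique of $G_N$. By the second FPP property, $S$ and $S_0$ meet in exactly one element, so $S$ contains precisely one vertex of $I = S_0$ and its other $N$ vertices lie in the clique $X = V \setminus I$; since that lone $I$-vertex is adjacent to all of $X$, the whole of $S$ induces a clique.

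Next I would check that $\calC$ partitions the edge set. Every edge $\{u,v\}$ of $G_N$ is a pair of elements not both contained in $I = S_0$. By the first FPP property there is a unique $S \in \calS$ with $u,v \in S$, and as not both endpoints lie in $S_0$ this $S$ differs from $S_0$; hence the edge is covered by exactly one member of $\calC$. For the converse direction I would use the second property again: no set other than $S_0$ contains two elements of $S_0$, so every pair inside a clique $S \in \calC$ is a genuine edge of $G_N$ and never a non-edge within $I$. Together these show each edge is covered exactly once and no non-edge is ever covered.

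There is no real obstacle once the correspondence is fixed; the single conceptual step is deciding to let $I$ play the role of a line of the plane and to use the remaining lines as the cliques. After that, properties (1) and (2) of the FPP do all the work: (1) gives that each edge lies in a unique covering set and (2) guarantees that each chosen set meets $I$ in a single vertex, so the verification is essentially bookkeeping. I note that property (3), the $4$-element set in general position, is not needed here; it only serves to force $N \ge 2$.
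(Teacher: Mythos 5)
Your proposal is correct and follows essentially the same route as the paper's proof: identify the independent set $I$ with one fixed line $S_0$ of the plane, take the remaining $N^2+N$ lines as the cliques, and use Property~1 for the exact-once edge coverage and Property~2 for the fact that each such line meets $I$ in at most one vertex (hence induces a clique). The only cosmetic difference is that you make the clique-verification and the non-edge check slightly more explicit than the paper does.
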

\begin{proof}
Let $\calS$ be an FPP of order $N$ over a universe $U$, and fix one of its sets 
$S\in\cal S$. We identify this set with the independent set of $G_N$, i.e., 
$S=I$. After fixing the elements of~$S$, all other elements in $U\setminus S$ 
are arbitrarily identified with the other vertices in $X$. We claim that the 
remaining sets in $\calS\setminus\{S\}$ form a clique partition, i.e., if 
$C_{S'}=\{uv\in E(G_N)\mid u,v\in S'\}$ then the set $\calC=\{C_{S'}\mid S'\in 
\calS\setminus\{S\}\}$ partitions the edge set of $G_N$ into cliques. From 
Property~1 of an FPP, for any edge $uv$ (i.e.,~at least one of $u$ and $v$ is in 
$X$) there is exactly one set $S'\in\calS\setminus\{S\}$ such that $u,v\in S'$. 
This means that the subgraphs in $\calC$ partition the edge set. Furthermore, by 
Property~2 no $S'\in\calS\setminus\{S\}$ intersects in more than one vertex with 
the independent set $I$. Thus every subgraph of $\calC$ is a clique. Moreover, 
any FPP of order~$N$ has exactly $N^2+N+1$ sets, and so there are $N^2+N$ 
cliques in $\calC$.
\end{proof}

To prove the other direction, i.e, that a small ECP solution to $G_N$ implies the 
existence of an~FPP, we need the following lemma.

\begin{lemma}
\label{lem:clique-size}
If $\calC$ is a set of cliques that partition the edges of $G_N$ and $|\calC|\le 
N^2+N$, then for each $C\in \calC$, $|V(C)|=N+1$.
\end{lemma}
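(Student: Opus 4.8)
The plan is to exploit the independent set $I$ (of size $N+1$) together with the hypothesis $|\calC|\le N^2+N$ in a double-counting argument, using that an edge-clique-partition can never place two vertices of $I$ in one clique. First I would record the basic parameters: $|X|=N^2$, $|I|=N+1$, and that every $C\in\calC$ contains at most one vertex of $I$ (since $I$ is independent); I may assume every clique covers at least one edge, as a clique with no edge could be discarded. Fix $v\in I$. Since every edge $vu$ with $u\in X$ lies in exactly one clique, the sets $V(C)\cap X$ ranging over the cliques $C\ni v$ form a partition of $X$ into $t_v$ nonempty parts, where $t_v$ is the number of cliques containing $v$. As each clique has at most one $I$-vertex, the number of cliques meeting $I$ equals $\sum_{v\in I}t_v$, and therefore $\sum_{v\in I}t_v\le|\calC|\le N^2+N$.

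The crux is a \emph{pairwise orthogonality} of these $N+1$ partitions of $X$. For distinct $u,u'\in X$, the edge $uu'$ lies in exactly one clique $C$, and $C$ has at most one $I$-vertex; hence $u$ and $u'$ lie in a common part of $v$'s partition for at most one $v\in I$ (namely the $I$-vertex of $C$, if it has one). Writing the part sizes of $v$'s partition as $a^{(v)}_1,\dots,a^{(v)}_{t_v}$, the number of pairs together in $v$'s partition is $\sum_j\binom{a^{(v)}_j}{2}$, so summing the previous observation over all pairs of $X$ gives $\sum_{v\in I}\sum_j\binom{a^{(v)}_j}{2}\le\binom{N^2}{2}$. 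I would then bound each inner sum from below by convexity: since the $a^{(v)}_j$ are $t_v$ positive integers summing to $N^2$, one has $\sum_j\binom{a^{(v)}_j}{2}\ge\frac{N^2}{2}\bigl(\frac{N^2}{t_v}-1\bigr)$, with equality exactly when all parts are equal.

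Combining the two displayed bounds yields $\sum_{v\in I}\bigl(\frac{N^2}{t_v}-1\bigr)\le N^2-1$, hence $\sum_{v\in I}\frac{1}{t_v}\le\frac{N+1}{N}$. By Cauchy--Schwarz, $\bigl(\sum_v t_v\bigr)\bigl(\sum_v \frac{1}{t_v}\bigr)\ge(N+1)^2$, so $\sum_{v\in I}t_v\ge\frac{(N+1)^2 N}{N+1}=N^2+N$. Together with $\sum_{v\in I}t_v\le|\calC|\le N^2+N$, this pins down $\sum_{v\in I}t_v=|\calC|=N^2+N$ and forces every inequality above to be tight: there are no cliques avoiding $I$, all the $t_v$ are equal (so $t_v=N$ for each $v$), and each $v$-partition has equal parts of size $N^2/t_v=N$. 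Thus every clique contains exactly one vertex of $I$ and exactly $N$ vertices of $X$, i.e.\ $|V(C)|=N+1$, as claimed.

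I expect the main obstacle to be setting up the pairwise-orthogonality cleanly, namely verifying that the $X$-parts of the cliques through a fixed $v$ genuinely partition $X$ and that each pair of $X$ is charged to at most one $I$-vertex, and then arguing that the chain of inequalities (double counting, convexity, Cauchy--Schwarz, and the clique-number bound) can be \emph{simultaneously} tight; it is precisely this simultaneous tightness that upgrades the counting estimate into the exact statement that every clique has $N+1$ vertices.
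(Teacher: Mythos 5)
Your proof is correct, but it takes a genuinely different route from the paper's. The paper argues locally, in two short contradiction steps: first, if some clique $C$ had at least $N+2$ vertices, then the at least $(N+1)N$ edges between $V(C)\setminus I$ and $I\setminus V(C)$ would each need a distinct clique of $\calC\setminus\{C\}$ (two such edges sharing a clique would either join two nonadjacent $I$-vertices or re-cover an edge of $C$), exceeding the $N^2+N-1$ cliques available; second, once $|V(C)|\le N+1$ holds for every clique, a single smaller clique would make the total number of covered edges strictly less than $(N^2+N)\binom{N+1}{2}=N^2(N+1)^2/2=|E(G_N)|$. Your argument is instead a global extremal one: you set up the $N+1$ pairwise orthogonal partitions of $X$ induced by the cliques through each $I$-vertex and chain the orthogonality bound, Jensen, and Cauchy--Schwarz, so that the hypothesis $|\calC|\le N^2+N$ forces simultaneous equality everywhere. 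That is heavier machinery than the paper needs, but it buys more: you directly obtain that $|\calC|=N^2+N$ exactly, that every clique meets $I$ in exactly one vertex, that exactly $t_v=N$ cliques pass through each $v\in I$, and that each clique has exactly $N$ vertices of $X$ --- structural facts the paper only recovers later, inside the proof of \cref{lem:ECP-FPP}. One small point to tidy up: you ``discard'' edge-less cliques at the outset, whereas the lemma quantifies over all of $\calC$. This is harmless, but only because of your own tightness conclusion: the reduced family already has $N^2+N$ members, so if anything had been discarded the original family would exceed $N^2+N$ cliques, contradicting the hypothesis. Either say this explicitly, or observe that a partition of the edge set has no empty parts, so edge-less cliques cannot occur in $\calC$ in the first place.
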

\begin{proof}
First let us prove that $|V(C)|\le N+1$. Suppose for the sake of contradiction 
that $|V(C)|\ge N+2$. Note that $C$ contains at most one vertex from $I$, as a 
clique and independent set can intersect on at most one vertex. Let 
$C':=V(C)\setminus I$ and $I':=I\setminus V(C)$. Clearly $|C'|\ge N+1$ and 
$|I'|\geq N$ (recall that $|I|=N+1$). Note that every edge in $C'\times I'$ has 
to be covered by a distinct clique in $\calC\setminus \{C\}$: any two edges that 
have different endpoints in $I$ cannot be in the same clique, since there is no 
edge between these endpoints, while any two edges with different endpoints in 
$C$ cannot be in the same clique, since the only edge between these endpoints is 
already covered by $C$. But there are $|C'||I'|\ge N^2+N$ such edges whereas 
there are only $N^2+N-1$ cliques in $\calC\setminus \{C\}$. Thus we have a 
contradiction.

Hence we established $|V(C)|\le N+1$. Now suppose for the sake of contradiction 
$|V(C)|< N+1$. Using the fact that every clique of $\calC$ has size at 
most $N+1$, the total number of edges covered by $\calC$ is strictly less than 
$|\calC|\binom{N+1}{2}\le (N^2+N)\binom{N+1}{2}=N^2(N+1)^2/2$. However, since
$|I|=N+1$ and consequently $|X|=N^2$, the total number of edges of $G_N$ is 
$\binom{N^2}{2}+N^2\cdot (N+1)=N^2(N+1)^2/2$. Thus, we have a contradiction.
\end{proof}

Now, we prove the other direction.

\begin{lemma}\label{lem:ECP-FPP}
Let $N\geq 2$.
If $\calC$ is a set of cliques that partition the edges of $G_N$ such that
$|\calC|\le N^2+N$, then~$\calS=\{V(C)\mid C\in\calC\}\cup \{I\}$ is an FPP of 
order $N$ over $V$. Moreover, the incidence matrix~$F$ of $\calS$ with
the column for $I$ removed from it, is the BSD of the adjacency matrix of $G_N$ that corresponds to $\calC$. 
\end{lemma}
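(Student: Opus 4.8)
The plan is to verify the three FPP axioms for $\calS = \{V(C) \mid C \in \calC\} \cup \{I\}$ and then separately check the "moreover" claim about the incidence matrix. Throughout I would use \cref{lem:clique-size}, which guarantees that every clique $C \in \calC$ has $|V(C)| = N+1$, matching the expected FPP line size; I would also use that $|\calC| = N^2 + N$ exactly (if it were smaller, the edge count from \cref{lem:clique-size} would fail), so $|\calS| = N^2 + N + 1$, the correct number of lines for an order-$N$ FPP. The universe is $V$ with $|V| = N^2 + N + 1$ elements, which is also correct.

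First I would establish Property~1 (every pair of elements lies in exactly one set). For a pair $u,v$ that forms an edge of $G_N$, the partition property of $\calC$ gives exactly one clique $C$ with $u,v \in V(C)$, and $I$ contains at most one of them, so the set is unique. For a pair $u,v$ with $u,v \in I$ (a non-edge), the only set containing both must be $I$ itself, since no clique of $\calC$ can contain two vertices of the independent set $I$; uniqueness holds because any $V(C)$ meets $I$ in at most one vertex. Next I would verify Property~2 (every two sets meet in exactly one element) by a counting/double-counting argument: I have $N^2+N+1$ sets each of size $N+1$ over a universe of size $N^2+N+1$, and each element lies in exactly $N+1$ sets (this last fact I would check by counting incidences, using that each element's edges are partitioned among the cliques through it plus the membership in $I$ where applicable). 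Given these uniform parameters, the "exactly one common element" condition follows from a standard Fisher-type inclusion--exclusion count, or more directly from Property~1 by a dual counting argument. Property~3 (the existence of four points in general position) I would derive from the split structure of $G_N$: picking two vertices of $I$ and two of $X$ and checking that the unique lines through the six pairs force no three to be collinear.

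For the \textbf{moreover} statement I would argue that the adjacency matrix $M$ of $G_N$ satisfies $M = FF^T - \mathrm{(diagonal/I\text{-}corrections)}$, where $F$ is the full incidence matrix of $\calS$. Concretely, $(FF^T)_{u,v}$ counts the number of sets of $\calS$ containing both $u$ and $v$, which by Property~1 equals $1$ for every pair $u \ne v$. Removing the column corresponding to $I$ from $F$ to get $F'$, the product $(F'F'^T)_{u,v}$ then counts sets \emph{other than} $I$ containing both $u,v$: this is $1$ when $uv$ is an edge and $0$ when $u,v \in I$, which is exactly $M_{u,v}$ off the diagonal. Since the BSD condition only constrains off-diagonal entries (the diagonal carries wildcards), this shows $F' (F')^T \es M$, and the column-to-clique correspondence from \cref{lem:equiv} identifies $F'$ as precisely the BSD corresponding to $\calC$.

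The \textbf{main obstacle} I anticipate is Property~2: Properties~1 and~3 follow fairly directly from the clique-partition structure and the split-graph geometry, but showing that every pair of sets meets in exactly one point requires leveraging the exact parameter count rather than just the partition property, and care is needed to handle pairs involving the special set $I$ separately from pairs of cliques. I expect the cleanest route is to first nail down that each element lies in exactly $N+1$ sets, and then invoke the standard characterization that a uniform set system with these parameters satisfying Property~1 is automatically a projective plane, reducing Property~2 to a parameter check.
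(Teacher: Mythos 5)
Your proposal is essentially correct, but for Property~2 --- the crux of the lemma --- you take a genuinely different route from the paper. The paper argues directly: first $|S\cap S'|\le 1$ for all $S,S'\in\calS$ (two shared points would mean a doubly covered edge or a clique containing a non-edge), and then two \emph{disjoint} sets $S,S'$ are ruled out because, using \cref{lem:clique-size}, the more than $N^2+N$ cross edges between them would have to lie in pairwise distinct cliques, which is impossible since $|\calC|\le N^2+N$. Your route instead pins down the exact design parameters --- $|\calC|=N^2+N$ exactly (your edge-count justification is correct) and point-regularity $N+1$ (correct for both $X$-vertices and $I$-vertices) --- and then invokes the standard fact that a linear space with $N^2+N+1$ points, $N^2+N+1$ lines, line size $N+1$ and point degree $N+1$ is a projective plane; equivalently, the dual counting argument: a point $p\in S\setminus S'$ lies on exactly $N+1$ lines, and the $N+1$ points of $S'$ already determine $N+1$ distinct lines through $p$, so every line through $p$, in particular $S$, meets $S'$. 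This is valid and standard; it costs extra parameter bookkeeping (and either a citation or carrying out the counting you only sketch), but it connects the statement to classical design theory, whereas the paper's argument is shorter and self-contained, needing neither the exact value of $|\calC|$ nor point-regularity. Your Property~1 argument matches the paper's, and your computation for the ``moreover'' part (that $F'(F')^T$ counts covering cliques, giving $1$ on edges, $0$ on non-edges of $I$, with wildcards absorbing the diagonal) is actually more explicit than the paper, which merely asserts this step.

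One concrete slip: for Property~3 you propose to pick two vertices of $I$ and two of $X$ and ``check'' that no three are collinear. An arbitrary choice can fail: the unique clique $C\in\calC$ covering the edge between your two $X$-vertices may contain one of your two $I$-vertices, and then those three points do lie in the common set $V(C)$. You must choose the two $I$-vertices outside $V(C)$, which is possible since $|I|=N+1\ge 3$ while $|V(C)\cap I|\le 1$; this repaired choice is exactly the paper's, namely two vertices of $V(C)\setminus I$ and two of $I\setminus V(C)$ for an arbitrary clique $C\in\calC$.
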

\begin{proof}
We will prove that $\calS=\{V(C)\mid C\in\calC\}\cup \{I\}$ satisfies the three 
properties in the definition of an FPP, which then has order $N$ by 
\cref{lem:clique-size}. Property~1 follows easily from the definition of an 
edge clique partition: for each pair of adjacent vertices there is exactly one 
clique covering their edge, while any pair of non-adjacent vertices only appear 
in $I$.

Let us now prove Property~2. For any $S,S'\in\calS$, it follows easily from 
the definition of an edge clique partition that $|S\cap S'|\le 1$ (otherwise 
some edge is contained in two cliques). Also, for any $S\in \calS$, it is true 
that $|S\cap I|\le 1$ (otherwise some clique would contain a non-edge). Assume 
there are $S,S'\in \calS$ with $S\cap S'=\emptyset$. By \cref{lem:clique-size}, 
we have $|S|=|S'|=N+1$, and so all the $(N+1)^2$ edges of $S\times S'$ have to 
be covered by distinct cliques (otherwise some clique would contain an edge 
already covered by one of the cliques induced by $S$ or~$S'$). But we do not 
have so many cliques as $|\calC|\leq N^2+N$. Thus we have $|S\cap S'|=1$ for 
any $S,S'\in \calS$, and so Property~2 is satisfied.

Let us now prove Property~3. Consider any arbitrary clique $C\in \calC$. Pick 
two vertices from $V(C)\setminus I$ and two vertices from $I\setminus V(C)$.  
Note that $|V(C)\setminus I|=|I \setminus V(C)|\geq N+1-1=N\ge 2$, and hence 
two vertices can be picked from the sets. It is easy to see that out of these 
four vertices at most two are in any set in $\calS$.

It is easy to see that the incidence matrix $F$ of $\calS$ minus the 
column for $I$ is the BSD of the adjacency matrix of $G_N$ that corresponds to the clique partition $\calC$.
\end{proof}

We use the above reduction from FPP to ECP, and the following fact about FPPs to prove \cref{thm:ones}.
\begin{fact}
	\label{fact:incidence}	
	The element-set incidence matrix of any FPP has full rank~\cite{sachar1979f}.
\end{fact}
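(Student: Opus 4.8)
The plan is to prove that the square incidence matrix $F\in\bin^{(N^2+N+1)\times(N^2+N+1)}$ is nonsingular over $\RR$, which for a square matrix is exactly the statement that it has full rank. Writing $n:=N^2+N+1$, the cleanest route is through the Gram matrix $FF^T$: since a real square matrix satisfies $\det(F)^2=\det(FF^T)$, it suffices to show that $FF^T$ is nonsingular, and this Gram matrix turns out to have a completely explicit form.

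First I would read off the entries of $FF^T$ directly from the two incidence axioms of an FPP. The entry $(FF^T)_{e,e'}=\sum_{S\in\calS}F_{e,S}F_{e',S}$ counts the number of sets of $\calS$ containing both elements $e$ and $e'$. For $e=e'$ this equals $N+1$, using the dual counting fact (stated in the excerpt) that every element lies in exactly $N+1$ sets; for $e\neq e'$ it equals $1$ by Property~1, which guarantees a unique set through any two elements. Hence $FF^T=N\,I+J$, where $I$ is the $n\times n$ identity and $J$ is the all-ones matrix. (This is the familiar shape $FF^T=(r-\lambda)I+\lambda J$ of a symmetric $2$-design with replication number $r=N+1$ and $\lambda=1$.)

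Next I would compute the spectrum of $N\,I+J$. The all-ones matrix $J$ has eigenvalue $n$ (with the all-ones vector as eigenvector) and eigenvalue $0$ with multiplicity $n-1$; adding $N\,I$ shifts these to $N+n=(N+1)^2$ and $N$, respectively. Therefore $\det(FF^T)=(N+1)^2\cdot N^{\,n-1}$, which is strictly positive because $N\ge 2$. Consequently $\det(F)^2=\det(FF^T)>0$, so $F$ is nonsingular and thus has full rank. The argument is essentially routine linear algebra, and I do not expect any genuine obstacle; the only point needing care is the bookkeeping that turns the incidence axioms into the precise identity $FF^T=N\,I+J$ (especially obtaining the diagonal value $N+1$ from the dual incidence count), after which the spectral computation and the determinant identity finish the proof. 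This is why the statement is recorded merely as a Fact.
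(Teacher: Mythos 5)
Your proof is correct, and it is complete where the paper is not: the paper records this statement as a bare Fact with a citation to \cite{sachar1979f} and gives no argument of its own, so there is no in-paper proof to compare against. Your route is the classical Fisher-type nonsingularity argument, and every step checks out. The entry computation is right: $(FF^T)_{e,e'}$ counts the lines through $e$ and $e'$, giving $N+1$ on the diagonal (each element lies in exactly $N+1$ sets) and $1$ off the diagonal (Property~1), so $FF^T=N\,I+J$ with $n=N^2+N+1$. The spectral step is also right: $J$ has eigenvalues $n$ (once) and $0$ (with multiplicity $n-1$), so $\det(FF^T)=(N+n)\,N^{n-1}=(N+1)^2 N^{n-1}>0$; note you only need $N>0$ here, so the appeal to $N\ge 2$ is harmless but not essential. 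One hypothesis you use implicitly and should flag is that $F$ is \emph{square} --- the identity $\det(F)^2=\det(FF^T)$ requires this --- but that is guaranteed since an FPP has equally many elements and sets, as the paper states. A final remark worth knowing: the cited reference actually concerns the rank of incidence matrices over finite fields $\mathbb{F}_p$, which is a subtler question; what the paper's application (linear independence of rows of the BSD basis, used in the proof of the $\Theta(k^{3/2})$ lower bound) needs is precisely rank over $\RR$, and your elementary determinant argument establishes exactly that, arguably more directly than the citation does.
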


\begin{proof}[Proof of \cref{thm:ones}]
	Let $N$ be a prime power
	and $k:=N^2+N$.
	We will show that the adjacency matrix $A$ of $G_N$ has a rank-$k$ BSD and every basis of every rank-$k$ BSD of $A$ has $\Theta(k^{3/2})$ ones.
	Note that $A$ is a $(k+1)\times (k+1)$ binary matrix as stated in the theorem.
	By Fact~\ref{fact:prime-power}, we have that there is an FPP of order $N$.
	Then by \cref{lem:FPP-ECP}, 
there is an edge clique partition of $G_N$ with at most $k=N^2+N$ cliques.
Thus, the adjacency matrix $A$ of $G_N$ has a rank-$k$ BSD, by using the equivalence in \cref{lem:equiv}.

Now, consider any rank-$k$ BSD $B$ of $A$ and
$\basis$ be any basis of $B$.
Then, by \cref{lem:equiv}, there is an edge clique partition of $G_N$ with at most $k$ cliques.
By \cref{lem:ECP-FPP}, $\calS=\{V(C)\mid 
C\in\calC\}\cup \{I\}$ is an FPP of order $N$.
Let $F$ be the element-set incidence matrix of $\calS$.
By \cref{lem:ECP-FPP}, $B$ is equal to $F$ 
 minus the column in $F$ corresponding to $I$. 
 By \cref{fact:incidence}, $F$ has full rank, i.e. it has rank $N^2+N+1=k+1$.
 This implies $B$ has rank $k$, and hence has at least $k$ columns. 
 Since $B$ is a rank-$k$ BSD, this means it has exactly $k$ columns, and hence is a $(k+1)\times k$ matrix.
 Since $B$ has rank $k$, we have that
 $\basis$ has $k$ rows and $k$ columns.
 Thus, $\basis$ is $B$ minus some row of $B$.
 Since each column of $B$ corresponds to a clique of $\calC$ containing $N+1$ vertices by 
 \cref{lem:clique-size}, we have that $B$ has $k(N+1)$ ones.
 Hence the number of ones in $\basis$ is at least $k(N+1)-k=\Theta(k\sqrt{k})$.
\end{proof}

\section{Conclusion and Open Problems}\label{sec:open_problems}
We showed that AWECP admits a kernel with $4^k$ vertices, and an algorithm with 
a  runtime of~$2^{O(k^{3/2}w^{1/2}\log(kw))}n^{O(1)}$, which implies that ECP 
can be solved in $2^{O(k^{3/2}\log k)}n^{O(1)}$ time. We think the following are the most interesting related open questions. 

\begin{itemize}
	\item Close the gap further between the upper and lower bounds on the running time for ECP that are currently $2^{\bigO(k^{3/2}\log k)}n^{\bigO(1)}$ and $2^{\Omega(k)}n^{\bigO(1)}$ respectively. 
	\item 
Does WECP admit a polynomial sized 
kernel like ECP?
	\item The algorithm of \citet{chandran2017parameterized} for Bipartite 
Biclique Partition with runtime $2^{O(k^{2})}n^{O(1)}$ is also based on guessing 
the basis of a binary decomposition $A=BC$, and is currently the fastest FPT algorithm for the problem.
If we can show that in any solution 
at least one of $B$ and $C$ has a row basis (column basis in case of $C$) with 
at most $g(k)$ ones, then we get a running time $2^{O(g(k)\log k)}n^{O(1)}$ 
using a similar algorithm as we gave for ECP. 
What 
is the minimum value of $g(k)$ possible?
	\item Can we show a tightness of analysis of our algorithm 
for WECP as we showed for ECP in \cref{sec:fpp}, i.e., can we construct 
positive integer matrices  with largest weight $w$ that has a rank-$k$ BSD and every basis of every rank-$k$ BSD have $\Omega(k^{3/2}w^{1/2})$ ones?
\end{itemize}

\printbibliography

\end{document}